\DeclareFontFamily{T1}{pzc}{}
\DeclareFontShape{T1}{pzc}{m}{it}{<-> [1.2] pzcmi8t}{}
\DeclareMathAlphabet{\mathpzc}{T1}{pzc}{m}{it}
\newtheorem{definition}{Definition}
\newtheorem{fact}{Fact}
\newtheorem{prop}{Proposition}
\newtheorem{remark}{Remark}
\newtheorem{prob}{Problem}
\newtheorem{lemma}{Lemma}
\newcommand{\nn}{\nonumber}
\newcommand{\pmat}[1]{\begin{pmatrix}#1\end{pmatrix}}
\renewcommand{\geq}{\geqslant}
\renewcommand{\leq}{\leqslant}
\newcommand{\abs}[1]{\left\lvert{#1}\right\rvert}
\newcommand{\norm}[1]{\left\lVert#1\right\rVert}
\newcommand{\R}{\mathbb{R}}
\newcommand{\N}{\mathbb{N}}
\newcommand{\Sc}{\mathcal{S}}
\renewcommand{\u}{\upsilon}
\renewcommand{\P}{\mathcal{P}}
\newcommand{\Nset}{\mathcal{Q}}
\DeclareMathOperator{\minimize}{minimize}
\DeclareMathOperator{\sbjto}{subject\;to}
\title[Scheduling and Control of Networked Systems]{A Sparsity Approach to Scheduling and Control\\of Networked Systems}
\author{Anubhab Dasgupta and Atreyee Kundu}
\thanks{Anubhab Dasgupta is with the Department of Mechanical Engineering, Indian Institute of Technology Kharagpur, West Bengal - 721302, India, E-mail: anubhab.dasgupta@kgpian.iitkgp.ac.in. Atreyee Kundu is with the Department of Electrical Engineering, Indian Institute of Technology Kharagpur, West Bengal - 721302, India, E-mail: atreyee@ee.iitkgp.ac.in}
\thanks{A part of this work is accepted at the 62nd IEEE Conference on Decision and Control, Singapore, December 2023.}
\keywords{Networked Systems, Scheduling, Control, Sparse Optimization}
\date{\today}
\begin{document}

	\begin{abstract}
       We study the design of scheduling logic and control logic for networked control systems (NCSs) where plants communicate with their remotely located controllers over a shared band-limited communication network. Due to a limited capacity of the network, only a subset of the plants can exchange information with their controllers at any instant of time and the remaining plants operate in open-loop. Our key contribution is a new algorithm that co-designs (a) an allocation scheme of the communication network among the plants (scheduling logic) and (b) the control inputs for the  plants accessing the network (control logic) under which given non-zero initial states are steered to zero in a given time horizon for all the plants in the NCS. Sparse optimization is the primary apparatus for our analysis. We also provide sufficient conditions on the plant dynamics, capacity of the communication network and the given time horizon that lead to a numerically tractable implementation of our algorithm. A numerical experiment is presented to demonstrate the proposed results.
	\end{abstract}
    \maketitle
\section{Introduction}
\label{s:intro}
    Networked Control Systems (NCSs) are an integral part of modern day smart home, smart transportation, smart city, remote surgery, platoons of autonomous vehicles, underwater acoustic communication systems, etc. The network resources for these applications are often limited in the sense that the number of plants sharing a communication network is higher than the capacity of the network. The task of efficient allocation of communication resources is commonly referred to as a \emph{scheduling problem} and the corresponding allocation scheme is called a \emph{scheduling logic}. The set of control inputs to be communicated over the network when the plants have access to it is called a \emph{control logic}. In this paper we will study the co-design of scheduling logic and control logic for NCSs.

    The existing classes of scheduling logic can be classified broadly into two categories --- static (also called periodic, fixed or open-loop) and dynamic (also called non-periodic, or closed-loop) scheduling. In case of static scheduling, a finite length allocation scheme of the shared communication network is determined offline and is applied eternally in a periodic manner. These scheduling logics are easier to implement, often near-optimal and guarantee activation of each sensor and actuator \cite{Peters'16}. In case of dynamic scheduling, the allocation of the network is determined based on some information (e.g., states, outputs, access status of sensors and actuators, etc.) about the plant. These scheduling logics are more suitable to adapt to unforeseen runtime faults in the NCS \cite{Peters'16}. 
    
    Under limited bandwidth of the communication network, dynamics of the plants in an NCS switches between closed-loop and open-loop operations based on whether or not a plant has access to the network. As a natural choice, scheduling and control of NCSs have been addressed as an application of switched systems theory \cite[\S 1.1.2]{Liberzon} widely in the literature. A static scheduling logic that preserves stability of all plants in an NCS is characterized using common Lyapunov functions in \cite{Hristu2001} and piecewise Lyapunov-like functions with average dwell time switching in \cite{Lin2005}. A more general case of co-designing a static scheduling logic and control logic is addressed using linear matrix inequalities optimization with average dwell time technique in \cite{Dai2010}. In \cite{Zhang2006} the authors characterize static switching logic that ensures reachability and observability of the plants under limited communication, and design an observer-based feedback controller for this logic. The corresponding techniques are later extended to the case of constant transmission delays in \cite{Hristu2008} and linear quadratic Gaussian (LQG) control problem in \cite{Hristu_Zhang2008}. In \cite{abc} multiple Lyapunov-like functions and graph theory are employed to design a set of stabilizing switching logics that combined together under certain constraints leads to a stabilizing static scheduling logic. In \cite{def} Lie algebraic techniques for stability of switched systems are used to design stability preserving static scheduling logic under jamming attacks in the network. Stability of Markovian jump linear systems is employed to design stabilizing scheduling logics under data losses in the network in \cite{ghi}.
    
    Among other tools, scheduling and control co-design is addressed using combinatorial optimization with periodic control theory in \cite{Rehbinder2004}. Event-triggered scheduling logics that preserve stability of all plants under communication delays are proposed in \cite{Al-Areqi'15}. In \cite{Quevedo2014} the authors propose a mechanism to allocate network resources by finding optimal node that minimizes a certain cost function in every network time instant. The design of dynamic scheduling logic for stability of each plant under both communication uncertainties and computational limitations is studied in \cite{Saha2015}. In \cite{Gatsis2016} a class of distributed control-aware random network access logics for the sensors such that all control loops are stabilizable, is presented. A dynamic scheduling logic based on predictions of both control performance and channel quality at run-time, is proposed in \cite{Ma2019}. 
    
     In this paper we consider an NCS that consists of \(N\) discrete-time linear plants whose feedback loops are closed through a shared communication network. The network has a limited communication capacity and can support at most \(M\:(0<M<N)\) plants at any instant of time. Consequently, at most \(M\) plants in our NCS can exchange information with their controllers at any instant of time and the remaining at least \(N-M\) plants operate in open-loop, i.e., with a zero control input. Allocation of the communication network among the plants is thus required to ensure that good qualitative properties of each plant in the NCS is preserved. In this paper our objective is to design a scheduling logic and a control logic such that for each plant a given non-zero initial state is steered to zero in a given time horizon.
    
    We observe that if for all plants the given initial state can be steered to zero in open-loop, then no allocation of the shared communication network is required to achieve our objective. Similarly, if a closed-loop operation for steering the given initial state to zero is required at most for \(M\) plants, then the network can be allocated to these plants for all the time. Our objective is, however, no longer trivial if more than \(M\) plants require closed-loop operation, i.e., access to the shared network at least once for its initial state to reach zero in a given time. We consider the setting where all \(N\) plants have this requirement and present an algorithm to design a scheduling logic and a control logic under which a given non-zero initial state is steered to zero in a given time for each plant in the NCS.
    
    The key apparatus in our algorithm is a feasibility problem with sparse constraints (i.e., \(\ell_0\)-constraints). A solution to this feasibility problem is a control logic that for each plant steers the given non-zero initial state to zero in the given time horizon while obeying the constraint that not more than a certain number of plants receive a non-zero control input at any instant of time. Once this control logic is fixed, we allow the plants with non-zero control inputs an access to the shared communication network leading to a scheduling logic. The presence of an \(\ell_0\)-constraint in our design of a control logic, however, gives rise to a non-convex optimization problem. We present sufficient conditions on the plant dynamics, the capacity of the communication network and the given time horizon under which the non-convex optimization problem under consideration admits solutions. We also discuss algorithms to construct these solutions numerically. Further, we study how the non-convex optimization problem can be solved by employing convex relaxations, and provide sufficient conditions on the plant dynamics, the capacity of the communication network, and the time horizon such that a solution to the convex relaxation matches the solution of the original non-convex problem. To the best of our knowledge, our algorithm introduces a new tool for the design of scheduling and control logic for NCSs to the literature.  
   
    The remainder of this paper is organized as follows. We formulate the problem under consideration in \S\ref{s:prob_stat}. Our results appear in \S\ref{s:mainres1} and \S\ref{s:mainres2}. A numerical experiment is presented in \S\ref{s:numex}. We conclude in \S\ref{s:concln} with a brief discussion on future research directions. Proofs of our results appear in a consolidated manner in \S\ref{s:proofs}.
    
    {\bf Notation}. \(\R\) is the set of real numbers and \(\N\) is the set of natural numbers. For \(v\in\R^n\), \(\norm{v}_0\) denotes its \(\ell_0\)-norm, i.e., the number of non-zero elements in \(v\), \(\norm{v}_1\) denotes its \(\ell_1\)-norm, i.e., the sum of absolute values of the elements of \(v\), and \(\norm{v}_2\) denotes its standard Euclidean norm. Let \(0_n\) denote an \(n\)-dimensional zero vector. We extend this notation to also represent an \(m\:(> n)\)-dimensional vector with \(m-n\)-many non-zero entries and \(n\)-many zero entries as \(\pmat{a_1\\\vdots\\a_{m-n}\\0_n}\) (resp., \(\pmat{0_n\\a_1\\\vdots\\a_{m-n}}\)).

\section{Problem statement}
\label{s:prob_stat}
     We consider an NCS with \(N\) discrete-time linear plants whose dynamics are given by
    \begin{align}
    \label{e:plants}
        x_i(t+1) = A_ix_i(t)+b_iu_i(t),\:\:x_{i}(0)=x_{i}^{0},\:\:t=0,1,\ldots,T-1,
    \end{align}
    where \(x_{i}(t)\in\R^{d_i}\) and \(u_{i}(t)\in\R\) are the vector of states and scalar control input of the \(i\)-th plant at time \(t\), respectively, \(A_{i}\in\R^{d_i\times d_i}\), \(b_{i}\in\R^{d_i}\) are constants, \(i=1,2,\ldots,N\). The time horizon \(T\in\N\) is given.
    
     The controllers are remotely located and each plant communicates with its controller through a shared communication network. We operate in the setting where the shared network has a limited communication capacity in the following sense: at any time instant, at most \(M\) plants \((0<M<N)\) can access the network. Consequently, at least \(N-M\) plants operate in open loop (i.e., with \(u_i(t)=0\)) at every time instant. We will, however, assume that the communication network is otherwise ideal in the sense that exchange of information between plants and their controllers is not affected by communication uncertainties.

     \begin{remark}
   \label{rem:equiv}
   \rm{
    	In this paper we assume that if a plant \(i\) does not have access to the shared communication network at time \(t\), then it operates with \(u_i(t) = 0\). Open-loop operation of a plant \(i\) at time \(t\) has been defined in various ways in the literature, for example, \(u_i(t)=\overline{u}\), where \(\overline{u}\) is the latest control input received. 
    }
     \end{remark}  
      
    Let \(\Sc\) be the set of all subsets of \(\{1,2,\ldots,N\}\) containing at most \(M\) distinct elements. 
    \begin{definition}
    \label{d:sched_logic}
    \rm{
         A function \(\gamma_T:\{0,1,\ldots,T-1\} \rightarrow \Sc\), that specifies at every time instant \(t = 0,1,\ldots,T-1\), at most \(M\) plants of the NCS which have access to the shared network at that time, is called a \emph{scheduling logic}.
         }
    \end{definition} 
    \begin{remark}
    \label{rem:no_access}
    \rm{
    Note that \(\emptyset\) is an element of \(\Sc\). Indeed, \(\emptyset\) corresponds to a set with no (distinct) elements. \(\gamma_T(t)=\emptyset\) for some \(t\in\{0,1,\ldots,T-1\}\) implies that no plant has access to the shared network at that time \(t\).}
    \end{remark}
    \begin{definition}
    \label{d:cont_logic}
    \rm{
         A function \(\upsilon_T:\{0,1,\ldots,T-1\}\to\R^N\) denoted by \(\upsilon_T(t) = \pmat{u_1(t)\\u_2(t)\\\vdots\\u_N(t)}\), that specifies at every time instant \(t=0,1,\ldots,T-1\), the control inputs \(u_i(t)\in\R\) for the plants \(i=1,2,\ldots,N\), is called a \emph{control logic}. 
        }
    \end{definition}
    \begin{remark}
   \label{rem:equiv1}
   \rm{
    	In view of our definition of an open-loop operation of plants, for a fixed \(\gamma_T(t)\), we need \(u_i(t)=0\) for all \(i\in\{1,2,\ldots,N\}\) such that \(i\notin\gamma_T(t)\). Further, in our setting, a closed-loop operation of a plant \(i\) with \(u_i(t)=0\) is equivalent to an open-loop operation of plant \(i\). Thus, for a fixed \(\upsilon_T(t)\), excluding all plants \(i\in\{1,2,\ldots,N\}\) with \(u_i(t) = 0\) from \(\gamma_T(t)\) is no loss of generality. 
    }
     \end{remark}  
     
     In the sequel we will refer to a pair, \((\gamma_T,\upsilon_T)\), as a \emph{scheduling and control logic} for the NCS under consideration. Our objective is:
    \begin{prob}
    \label{prob:main}
        Given the plant dynamics, \((A_i, b_i)\), \(i=1,2,\ldots,N\), the initial states, \(x_i(0)=\xi_i\neq 0_{d_i}\), \(i=1,2,\ldots,N\), the capacity of the communication network, \(M\), and the time horizon, \(T\), design a scheduling and control logic, \((\gamma_T,\u_T)\), under which \(x_i(T)=0_{d_i}\) for each plant \(i=1,2,\ldots,N\).
    \end{prob}
    
   
    	We will aim for an \emph{offline} solution to Problem \ref{prob:main} in the sense that we will compute \(\gamma_T(t)\) and \(\u_T(t)\), \(t=0,1,\ldots,T-1\), at one go prior to their application to the NCS. 
   We now move on to our results.
\section{Design of scheduling and control logic}
\label{s:mainres1}
     We begin with a set of observations for the setting where \(x_i(0) = \xi_i\) can be steered to \(x_i(T)=0_{d_i}\) in open-loop for some or all plants \(i\in\{1,2,\ldots,N\}\).
     
     \begin{prop}
     \label{prop:mainres0a}
        Suppose that there exist \(\tau_i\in\{1,2,\ldots,T\}\) such that \(A_i^{\tau_i}\xi_i = 0_{d_i}\), \(i=1,2,\ldots,N\). Then for each plant \(i=1,2,\ldots,N\), \(x_i(T) = 0_{d_i}\) under the scheduling and control logic, \((\gamma_T,\u_T)\), with \(\gamma_T(t) = \emptyset\) and \(\u_T(t) = 0\in\R^N\) for all \(t=0,1,\ldots,T-1\).
     \end{prop}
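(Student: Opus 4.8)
The plan is to verify directly that under the stated hypothesis and the proposed trivial scheduling and control logic, each plant's state reaches zero by time $T$. The essential observation is that the hypothesis $A_i^{\tau_i}\xi_i = 0_{d_i}$ for some $\tau_i \in \{1,2,\ldots,T\}$ means each initial state lies in the kernel of a power of $A_i$ no larger than $A_i^T$, so the open-loop (uncontrolled) trajectory collapses to zero within the horizon.

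First I would fix an arbitrary plant $i$ and substitute the chosen logic into the dynamics \eqref{e:plants}. Since $\gamma_T(t) = \emptyset$ for every $t$, Remark \ref{rem:no_access} tells us no plant has network access, and correspondingly $\u_T(t) = 0 \in \R^N$ forces $u_i(t) = 0$ for all $i$ and all $t$. The recursion \eqref{e:plants} therefore reduces to the purely autonomous update $x_i(t+1) = A_i x_i(t)$. A straightforward induction on $t$ then yields the closed form $x_i(t) = A_i^t \xi_i$ for every $t = 0,1,\ldots,T$.

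Next I would exploit the nilpotency-of-the-initial-state hypothesis. Because $A_i^{\tau_i}\xi_i = 0_{d_i}$ with $\tau_i \le T$, I can write $A_i^T \xi_i = A_i^{T-\tau_i}\bigl(A_i^{\tau_i}\xi_i\bigr) = A_i^{T-\tau_i}\, 0_{d_i} = 0_{d_i}$, where $T - \tau_i \ge 0$ guarantees the exponent is a legitimate (nonnegative) power. Combining this with the closed form gives $x_i(T) = A_i^T \xi_i = 0_{d_i}$. Since $i$ was arbitrary, the conclusion holds for all $i = 1,2,\ldots,N$.

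I do not anticipate a genuine obstacle here, as the statement is essentially definitional once the open-loop recursion is unwound; the only point requiring a modicum of care is confirming that the proposed logic is \emph{admissible}, i.e., that $\gamma_T(t) = \emptyset$ is a valid value (which follows from Remark \ref{rem:no_access}, noting $\emptyset \in \Sc$) and that the control logic is consistent with it (which follows from Remark \ref{rem:equiv1}, since $u_i(t) = 0$ for every $i \notin \gamma_T(t) = \{1,\ldots,N\}\setminus\emptyset$). Establishing this consistency is what makes the pair $(\gamma_T,\u_T)$ a legitimate scheduling and control logic rather than merely verifying the arithmetic of the state trajectory.
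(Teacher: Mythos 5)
Your proposal is correct and follows essentially the same route as the paper: substitute the zero control logic to reduce the dynamics to \(x_i(T)=A_i^T\xi_i\), then factor \(A_i^T = A_i^{T-\tau_i}A_i^{\tau_i}\) and apply the hypothesis \(A_i^{\tau_i}\xi_i = 0_{d_i}\). The extra admissibility check of \((\gamma_T,\u_T)\) is a harmless addition not spelled out in the paper's proof.
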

     
     Proposition \ref{prop:mainres0a} asserts that if for each plant \(i=1,2,\ldots,N\), \(x_i(0)=\xi_i\) can be steered to \(x_i(\tau_i) = 0_{d_i}\), \(\tau_i\leq T\), in open-loop, then our objective is achieved without the plants communicating with their remotely located controllers. This assertion is immediate. We present a proof of Proposition \ref{prop:mainres0a} in \S \ref{s:proofs} for completeness.
     
     Let 
     \begin{align*}
        U_{\xi_i} = \Bigl\{\bigl(u_i(t)\bigr)_{t=0}^{T-1}\:\:\big|\:\:x_i(0)=\xi_i\:\:\text{is steered to}\:\:x_i(T)=0_{d_i}\:\:\text{under}\:u_i(t),\:t=0,1,\ldots,T-1\Bigr\},
    \end{align*}
    \(i=1,2,\ldots,N\). 
    \begin{remark}
    \label{rem:zero_i/p}
    \rm{
        Notice that the set \(U_{\xi_i}\) may include sequences for which not all elements are non-zero. In the light of Remark \ref{rem:equiv1} such values of \(u_i(t)\), \(t\in\{0,1,\ldots,T-1\}\), \(i\in\{1,2,\ldots,N\}\) also relates to the corresponding plant \(i\) not accessing the shared communication network at time \(t\). We will utilize this feature in our results. 
        }
    \end{remark} 
    \begin{prop}
    \label{prop:mainres0b}
        Suppose that the following conditions hold:
        \begin{enumerate}[label=\alph*), leftmargin=*]
            \item There exists \(\mathcal{N}\subseteq\{1,2,\ldots,N\}\) such that
            \begin{enumerate}[label = \roman*), leftmargin=*]
                \item \(\abs{\mathcal{N}}\leq M\),
                \item for each \(i\in\mathcal{N}\), \(A_i^{\tau}\xi_i\neq 0_{d_i}\) for all \(\tau\in\{1,2,\ldots,T\}\), and
                \item for each \(i\in\mathcal{N}\), \(U_{\xi_i}\neq \emptyset\).
            \end{enumerate}
            \item For each \(i\in\{1,2,\ldots,N\}\setminus\mathcal{N}\), there exists \(\tau_i\in\{1,2,\ldots,T\}\) such that \(A_i^{\tau_i}\xi_i=0_{d_i}\). 
        \end{enumerate}
        Then for each plant \(i=1,2,\ldots,N\), \(x_i(T)=0_{d_i}\) under the scheduling and control logic, \((\gamma_T,\u_T)\), with
        \begin{align*}
            \gamma_T(t)&\subseteq\mathcal{N},\:\:t=0,1,\ldots,T-1,\:\text{and}\\
            \nu_T(t) &= \pmat{u_1(t)\\u_2(t)\\\vdots\\u_N(t)}\:\:\text{with}\:u_i(t)=0,\:\text{if}\:i\notin\mathcal{N}, \:u_i(t) = \overline{u}_i(t),\:\text{if}\:i\in\mathcal{N},\:t=0,1,\ldots,T-1,
        \end{align*}
        where \(\biggl(\overline{u}_i(t)\biggr)_{t=0}^{T-1}\in U_{\xi_i}\), \(i\in\mathcal{N}\).
    \end{prop}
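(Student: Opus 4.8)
The plan is to establish the conclusion separately for the two groups of plants singled out by hypotheses (a) and (b), after first checking that the proposed pair \((\gamma_T,\upsilon_T)\) is indeed an admissible scheduling and control logic. The argument is short and rests entirely on the definition of \(U_{\xi_i}\) together with the reasoning behind Proposition \ref{prop:mainres0a}.

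First I would confirm admissibility. Because \(\gamma_T(t)\subseteq\mathcal{N}\) and, by (a)(i), \(\abs{\mathcal{N}}\le M\), we get \(\abs{\gamma_T(t)}\le M\), so \(\gamma_T(t)\in\Sc\) for every \(t\) and \(\gamma_T\) is a scheduling logic in the sense of Definition \ref{d:sched_logic}. The control logic is consistent with this schedule: plants outside \(\mathcal{N}\) are assigned \(u_i(t)=0\) and hence need no network access, while the plants in \(\mathcal{N}\) carrying a non-zero input can all be admitted to the network at once since there are at most \(M\) of them---precisely the situation sanctioned by Remark \ref{rem:equiv1}.

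Next, for \(i\in\{1,2,\ldots,N\}\setminus\mathcal{N}\) the assigned input is \(u_i(t)=0\) for all \(t\), so plant \(i\) runs in open loop and \eqref{e:plants} yields \(x_i(t)=A_i^t\xi_i\). Hypothesis (b) supplies \(\tau_i\le T\) with \(A_i^{\tau_i}\xi_i=0_{d_i}\), whence \(x_i(T)=A_i^{T-\tau_i}A_i^{\tau_i}\xi_i=0_{d_i}\); this is exactly Proposition \ref{prop:mainres0a} applied to plant \(i\) alone. For \(i\in\mathcal{N}\) the assigned input is \(\bigl(\overline{u}_i(t)\bigr)_{t=0}^{T-1}\in U_{\xi_i}\), which by the definition of \(U_{\xi_i}\) steers \(x_i(0)=\xi_i\) to \(x_i(T)=0_{d_i}\); the existence of such a sequence is guaranteed by (a)(iii). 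Putting the two groups together gives \(x_i(T)=0_{d_i}\) for every \(i=1,2,\ldots,N\).

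I do not expect a genuine obstacle here: once admissibility is verified, the conclusion follows immediately from the definition of \(U_{\xi_i}\) and from Proposition \ref{prop:mainres0a}. The one point deserving a remark is that hypothesis (a)(ii)---that no plant in \(\mathcal{N}\) can be driven to zero in open loop---plays no role in deriving \(x_i(T)=0_{d_i}\); it serves only to justify that \(\mathcal{N}\) really is the set of plants that require network access, so that the capacity bound \(\abs{\mathcal{N}}\le M\) is the operative feasibility condition.
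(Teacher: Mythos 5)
Your proposal is correct and follows essentially the same route as the paper's own proof: split the plants into \(\mathcal{N}\) and its complement, apply the definition of \(U_{\xi_i}\) to the former, and repeat the open-loop computation of Proposition \ref{prop:mainres0a} for the latter. Your added observations---that admissibility of \((\gamma_T,\upsilon_T)\) rests on \(\abs{\mathcal{N}}\le M\), and that hypothesis (a)(ii) is not needed for the conclusion itself---are accurate but do not change the argument.
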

    
    Proposition \ref{prop:mainres0b} asserts that if there are at most \(M\) plants for which \(x_i(0)=\xi_i\) cannot be steered to \(x_i(\tau_i)=0_{d_i}\), \(\tau_i\leq T\), in open-loop but can be steered to \(x_i(\overline{\tau}_i)=0_{d_i}\), \(\overline{\tau}_i\leq T\), in closed-loop, and for all the remaining plants \(x_i(0)=\xi_i\) can be steered to \(x_i(\tilde{\tau}_i) = 0_{d_i}\), \(\tilde{\tau}_i\leq T\), in open-loop, then there exists a scheduling and control logic, \((\gamma_T,\u_T)\), under which \(x_i(T) = 0_{d_i}\) for each plant \(i=1,2,\ldots,N\). In particular, the control logic, \(\u_T\), takes the values of \(\bigl(\overline{u}_i(t)\bigr)_{t=0}^{T-1}\in U_{\xi_i}\) for the plants for which closed-loop operation is necessary for taking \(\xi_i\) to \(0_{d_i}\) and value zero for the remaining plants. The scheduling logic, \(\gamma_T\), allows access of the shared communication network to some or all the plants for which closed-loop operation is necessary for taking \(\xi_i\) to \(0_{d_i}\) depending on whether or not the corresponding control input is zero. This correspondence between the scheduling logic, \(\gamma_T\), and the control logic, \(\u_T\), is facilitated by our definition of open-loop operation of the plants, see Remark \ref{rem:equiv1}. We present a proof of Proposition \ref{prop:mainres0b} in \S \ref{s:proofs}.
    
    We now move to a more generalized setting and assume that for each plant \(i=1,2,\ldots,N\), at least one non-zero control input is required to steer \(\xi_i\) to \(0_{d_i}\) in \(T\) time units. This leads to our key results. 
    
    We note that
    \begin{prop}
    \label{prop:mainres0c}
        Suppose that for each plant \(i=1,2,\ldots,N\), \(A_i^\tau\xi\neq 0_{d_i}\) for any \(\tau\in\{1,2,\ldots,T\}\) and \(U_{\xi_i}\neq \emptyset\). If there exists a scheduling and control logic, \((\gamma_T,\nu_T)\), under which \(x_i(T) = 0_{d_i}\) for each plant \(i=1,2,\ldots,N\), then \(T\geq\bigl\lceil\frac{N}{M}\bigr\rceil\).
    \end{prop}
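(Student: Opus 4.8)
The plan is to prove the lower bound on $T$ by a double-counting (pigeonhole) argument. I will show that each plant must receive at least one non-zero control input somewhere in the horizon, while the capacity constraint caps the number of non-zero inputs issued at any single time instant at $M$. Comparing the total demand ($N$ inputs, at least one per plant) against the total supply ($MT$ available slots) forces $N \le MT$, from which $T \ge \bigl\lceil\frac{N}{M}\bigr\rceil$ follows at once.

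First I would make precise the claim that every plant needs at least one non-zero input. Unrolling the recursion \eqref{e:plants} gives the explicit expression
$$x_i(T) = A_i^T \xi_i + \sum_{t=0}^{T-1} A_i^{T-1-t} b_i u_i(t).$$
Suppose, for contradiction, that some plant $i$ has $u_i(t) = 0$ for every $t \in \{0,1,\ldots,T-1\}$. Then the summation vanishes and $x_i(T) = A_i^T \xi_i$, which by hypothesis (the case $\tau = T$ of $A_i^\tau \xi_i \neq 0_{d_i}$) is non-zero, contradicting $x_i(T) = 0_{d_i}$. Hence for each $i$ there is at least one time instant $t_i$ with $u_i(t_i) \neq 0$.

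Next I would count. For each $t$, let $n_t$ denote the number of plants with $u_i(t) \neq 0$. By Remark \ref{rem:equiv1}, $u_i(t) = 0$ whenever $i \notin \gamma_T(t)$, so the plants with non-zero input at time $t$ form a subset of $\gamma_T(t)$; since $\gamma_T(t) \in \Sc$ has at most $M$ elements, $n_t \le M$. Summing the number of non-zero inputs over $\{1,\ldots,N\} \times \{0,\ldots,T-1\}$ in two orders yields $\sum_{t=0}^{T-1} n_t = \sum_{i=1}^N \abs{\{t : u_i(t)\neq 0\}} \ge N$, using the previous step, while $\sum_{t=0}^{T-1} n_t \le MT$. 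Therefore $N \le MT$, i.e.\ $T \ge N/M$, and because $T \in \N$ this sharpens to $T \ge \bigl\lceil\frac{N}{M}\bigr\rceil$.

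This argument is essentially elementary, so I do not anticipate a serious obstacle; the only points that need care are (a) invoking Remark \ref{rem:equiv1} correctly to confine the non-zero inputs to the scheduled set, thereby converting the network-capacity bound $M$ into a per-time-slot cap on non-zero inputs, and (b) the passage from the real inequality $T \ge N/M$ to the integer bound $T \ge \bigl\lceil\frac{N}{M}\bigr\rceil$, which uses only that $T$ is a natural number. Note that I use only the $\tau = T$ instance of the hypothesis; the full assumption that $A_i^\tau \xi_i \neq 0_{d_i}$ for all $\tau \le T$ is stronger than what this particular lower bound requires.
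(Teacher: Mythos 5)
Your proof is correct and is essentially the paper's argument: both rest on the pigeonhole observation that at most $MT$ plant--time network slots are available while every plant must be scheduled (equivalently, must receive a non-zero input) at least once, since an always-open-loop plant ends at $x_i(T)=A_i^T\xi_i\neq 0_{d_i}$. The paper phrases this in contrapositive form (if $T<\bigl\lceil\frac{N}{M}\bigr\rceil$ some plant is never in $\gamma_T(t)$), whereas you run the count directly; your write-up also correctly notes that only the $\tau=T$ instance of the hypothesis is needed.
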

    
    Proposition \ref{prop:mainres0c} deals with the scenario where for each plant \(i=1,2,\ldots,N\), \(x_i(0)=\xi_i\) can be steered to \(x_i(T) = 0_{d_i}\) in closed-loop but not in open-loop, and provides a necessary condition on the time horizon as a function of the total number of plants and the capacity of the shared communication network for the existence of a scheduling and control logic, \((\gamma_T,\u_T)\), that achieves the desired objective. A proof of Proposition \ref{prop:mainres0c} is presented in \S \ref{s:proofs}. In the sequel we will operate under the assumption that the given time horizon, \(T\), is at least \(\bigl\lceil\frac{N}{M}\bigr\rceil\). It is easy to see that \(T\geq\bigl\lceil\frac{N}{M}\bigr\rceil\) is, however, not a sufficient condition for the existence of a scheduling and control logic, \((\gamma_T,\u_T)\), under which \(x_i(T)=0_{d_i}\) for each plant \(i=1,2,\ldots,N\). Indeed, consider \(N=2\), \(M=1\) and \(T = 3>2 = \bigl\lceil\frac{N}{M}\rceil\). Suppose that \(U_{\xi_1}=\{(2,2,2)\}\) and \(U_{\xi_2}=\{(3,3,3)\}\). In order to achieve \(x_i(3) = 0_{d_i}\), \(i=1,2\), an access to the shared communication network at each time \(t=0,1,2\) is required for both the plants which is not supported by the communication capacity, \(M=1\).

      We present an algorithm (Algorithm \ref{algo:sched-con_design}) that designs a scheduling and control logic, \((\gamma_T,\upsilon_T)\), when all plants \(i=1,2,\ldots,N\) require at least one non-zero control input to steer \(\xi_i\) to \(0_{d_i}\) in \(T\) time units. Our algorithm involves two steps: 
      \begin{itemize}[label=\(\circ\),leftmargin=*]
        \item First, it solves the feasibility problem \eqref{e:feasprob1} to compute the control logic, \(\upsilon_T\). A solution \(\bigl(\upsilon_T(t)\bigr)_{t=0}^{T-1}\) to \eqref{e:feasprob1} has the following properties: 
            \begin{enumerate}[label = (\alph*),leftmargin=*]
                \item for each plant \(i=1,2,\ldots,N\), \(\bigl(u_i(t)\bigr)_{t=0}^{T-1}\) steers the given initial state \(\xi_i\) to \(0_{d_i}\) in (at most) \(T\) units of time, and 
                \item at every time \(t=0,1,\ldots,T-1\), \(u_i(t)\neq 0\) at most for \(M\)-many plants \(i\in\{1,2,\ldots,N\}\).
            \end{enumerate}
        \item Second, at each time \(t=0,1,\ldots,T-1\), the plants \(i\in\{1,2,\ldots,N\}\) with \(u_i(t)\neq 0\) are assigned to \(\gamma_T(t)\), i.e., they are allowed an access to the shared communication network. These plants receive their non-zero control inputs at time \(t\) and the plants \(i\notin\gamma_T(t)\) operate in open-loop with \(u_i(t) = 0\), \(t=0,1,\ldots,T-1\).
      \end{itemize} 
      Notice that \eqref{e:feasprob1} is a sparse optimization problem in the sense that it involves an upper bound on the number of non-zero elements in \(\upsilon_T\) (and thus a lower bound on the number of zero elements in \(\upsilon_T\)) as an \(\ell_0\)-constraint. This sparse constraint accommodates the capacity of the shared communication network, \(M\), in our design of control logic, \(\upsilon_T\), and thus of scheduling logic, \(\gamma_T\). 
    \begin{algorithm}[htbp]
			\caption{Construction of a scheduling and control logic, \((\gamma_T,\upsilon_T)\)} \label{algo:sched-con_design}
		\begin{algorithmic}[1]
			\renewcommand{\algorithmicrequire}{\textbf{Input:}}
			\renewcommand{\algorithmicensure}{\textbf{Output:}}
			
			\REQUIRE The plant dynamics, \((A_i,b_i)\), \(i=1,2,\ldots,N\), the capacity of the communication network, \(M\), the initial states, \(x_i(0) = \xi_i\), \(i=1,2,\ldots,N\), and time horizon, \(T\).
			\ENSURE A scheduling and control logic, \((\gamma_T,\upsilon_T)\).
			
			\STATE Solve the following feasibility problem for \(\upsilon_T\):
			\begin{align}
		\label{e:feasprob1}
			\underset{\upsilon_T(0),\ldots,\upsilon_T(T-1)}\minimize&\quad1\\
			\sbjto&\:
			\begin{cases}
				x_i(t+1) = A_i x_i(t) + b_i {u}_i(t),\\
				\quad\quad\:\: t=0,1,\ldots,T-1,\:\: i=1,2,\ldots,N,\\
				x_i(0) = \xi_i,\:x_i(T) = 0_{d_i},
				\:\: i=1,2,\ldots,N,\\
				\norm{\upsilon_T(t)}_0\leq M,\:t=0,1,\ldots,T-1.\nn
			\end{cases}
		\end{align}
		
			\STATE If there exists a solution \(\upsilon_T\) to the feasibility problem \eqref{e:feasprob1}, then go to Step \ref{step:new1}. Otherwise, terminate the algorithm.
		
			\STATE\label{step:new1} Construct \(\gamma_T(t)\) as the set containing the elements of the set \(\{1,2,\ldots,N\}\) such that \({u}_i(t)\neq 0\), (i.e., the \(i\)-th component of \(\upsilon_T(t)\) is non-zero), \(t=0,1,\ldots,T-1\). 
			
		\end{algorithmic}
	\end{algorithm}
		
	The following proposition asserts that for each plant \(i=1,2,\ldots,N\), the state \(x_i(0)=\xi_i\) is steered to \(x_i(T)=0_{d_i}\) under a scheduling and control logic, \((\gamma_T,\upsilon_T)\), obtained from our algorithm. 
    
    \begin{prop}
    \label{prop:mainres1}
        Suppose that the plant dynamics, \((A_i,b_i)\), \(i=1,2,\ldots,N\), the capacity of the communication network, \(M\), the initial states, \(x_i(0)=\xi_i\), \(i=1,2,\ldots,N\), and the time horizon, \(T\), are given. Let \((\gamma_T,\upsilon_T)\) be a scheduling and control logic obtained from Algorithm \ref{algo:sched-con_design}. Then for each plant \(i=1,2,\ldots,N\), \(x_i(T) = 0_{d_i}\) under \((\gamma_T,\upsilon_T)\).
    \end{prop}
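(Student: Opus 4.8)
The plan is to argue that the control inputs actually applied to the NCS under the pair \((\gamma_T,\upsilon_T)\) coincide exactly with the components of the solution \(\upsilon_T\) returned by the feasibility problem \eqref{e:feasprob1}, so that the resulting closed-loop trajectory is precisely the trajectory that \eqref{e:feasprob1} already constrains to satisfy \(x_i(T)=0_{d_i}\). Since the argument is short, I do not expect a genuine obstacle; the only delicate point is a consistency check between the scheduling logic and the control logic, which I isolate in the final step.

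First I would observe that, since Algorithm \ref{algo:sched-con_design} returns an output rather than terminating at the feasibility check, the problem \eqref{e:feasprob1} admits a solution \(\bigl(\upsilon_T(t)\bigr)_{t=0}^{T-1}\). By the constraints in \eqref{e:feasprob1}, this solution satisfies the plant dynamics \(x_i(t+1)=A_ix_i(t)+b_iu_i(t)\) together with \(x_i(0)=\xi_i\) and \(x_i(T)=0_{d_i}\) for every \(i=1,2,\ldots,N\), where \(u_i(t)\) denotes the \(i\)-th component of \(\upsilon_T(t)\).

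Second, I would verify that \(\gamma_T\), as constructed in Step \ref{step:new1}, is a valid scheduling logic in the sense of Definition \ref{d:sched_logic}. By construction \(\gamma_T(t)\) collects exactly those indices \(i\) with \(u_i(t)\neq 0\), so \(\abs{\gamma_T(t)}=\norm{\upsilon_T(t)}_0\). The \(\ell_0\)-constraint \(\norm{\upsilon_T(t)}_0\leq M\) in \eqref{e:feasprob1} then gives \(\abs{\gamma_T(t)}\leq M\), hence \(\gamma_T(t)\in\Sc\) for every \(t=0,1,\ldots,T-1\).

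Third, and this is the crux of the argument, I would check that \((\gamma_T,\upsilon_T)\) is a \emph{consistent} scheduling and control logic pair, so that the trajectory it induces on the NCS is the same one constrained in \eqref{e:feasprob1}. By the construction in Step \ref{step:new1}, \(i\notin\gamma_T(t)\) holds if and only if \(u_i(t)=0\); thus every plant lacking network access at time \(t\) is precisely a plant whose solution component vanishes, and the open-loop input \(u_i(t)=0\) prescribed for such plants (cf.\ Remark \ref{rem:equiv1}) coincides with the value dictated by \(\upsilon_T(t)\). Consequently the control input applied to each plant \(i\) at each time \(t\) under \((\gamma_T,\upsilon_T)\) equals the \(i\)-th component of \(\upsilon_T(t)\), and the induced state trajectory is identical to the one appearing in \eqref{e:feasprob1}. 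Combining this with the terminal condition \(x_i(T)=0_{d_i}\) established in the first step yields \(x_i(T)=0_{d_i}\) for each plant \(i=1,2,\ldots,N\), as claimed.
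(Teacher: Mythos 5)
Your proposal is correct and follows essentially the same route as the paper's proof: verify that \(\gamma_T\) is a valid scheduling logic via the \(\ell_0\)-constraint, check consistency between \(\gamma_T\) and \(\upsilon_T\) (zero inputs exactly for plants without network access), and conclude \(x_i(T)=0_{d_i}\) directly from the constraints of \eqref{e:feasprob1}. Your third step merely makes explicit the consistency argument that the paper states more tersely.
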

    
    A proof of Proposition \ref{prop:mainres1} is provided in \S \ref{s:proofs}.
    
    \begin{remark}
    \label{rem:compa1}
    \rm{
        A vast body of the existing literature on scheduling and control of networked systems first fixes a scheduling logic and then designs the control inputs such that all plants achieve a desired property under the scheduling logic in consideration. In contrast, we first construct the sequences of control inputs that lead to a desired property for all the plants while obeying the limitation in communication capacity, and then construct the scheduling logic by employing non-zero instances of the control inputs.
        }
  \end{remark}
  
     \begin{remark}
    \label{rem:compa3}
    \rm{
        A scheduling and control logic, \((\gamma_T,\upsilon_T)\), obtained from Algorithm \ref{algo:sched-con_design} is \emph{static} or \emph{open-loop} in the sense that \(\gamma_T(t)\) and \(\upsilon_T(t)\) are designed offline at one go for all \(t=0,1,\ldots,T-1\). While open-loop scheduling and control logic are often easier to implement, they do not adapt to unforeseen faults in the plants and/or other components in the NCS that may occur during the operation.
    }
    \end{remark}

    \begin{remark}
    \label{rem:compa2}
    \rm{
        In the recent work \cite{Ikeda2022}, the authors presented a sparsity based approach (joint \(\ell_0\) and \(L^0\)-optimization) for scheduling of a continuous-time linear networked system towards achieving a certain controllability metric. While sparse optimization is also the key apparatus in our work, the objective differs from \cite{Ikeda2022} in the following ways: we use sparse optimization (\(\ell_0\)-optimization) for the design of scheduling and control logic for NCSs with multiple discrete-time plants operating under a limited communication capacity with the objective to steer non-zero initial states for each plant to zero in a given time horizon.
        }
    \end{remark}
    
   This concludes our discussion on scheduling and control logic, \((\gamma_T,\upsilon_T)\), under which \(x_i(0)=\xi_i\) is steered to \(x_i(T) = 0_{d_i}\) for each plant \(i=1,2,\ldots,N\). In the following section we address numerical implementation issues related to Algorithm \ref{algo:sched-con_design}. 
   
\section{Numerical implementation of Algorithm \ref{algo:sched-con_design}}
\label{s:mainres2}   
   We observe that
    \begin{fact}
    \label{fact:non-convexity}
        The feasibility problem \eqref{e:feasprob1} is not convex.
    \end{fact}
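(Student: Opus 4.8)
The plan is to show that the feasibility problem \eqref{e:feasprob1} is not convex by exhibiting the source of non-convexity, namely the \(\ell_0\)-constraints \(\norm{\upsilon_T(t)}_0\le M\), \(t=0,1,\ldots,T-1\). The linear dynamics constraints together with the boundary conditions \(x_i(0)=\xi_i\), \(x_i(T)=0_{d_i}\) define an affine subspace in the decision variables \(\bigl(\upsilon_T(t)\bigr)_{t=0}^{T-1}\), which is convex; hence any failure of convexity must come from the cardinality constraints. So the first step is to recall that \(\norm{v}_0\) counts the number of non-zero entries of \(v\) and that the set \(\{v\in\R^N:\norm{v}_0\le M\}\) is a union of coordinate subspaces, which is well known to be non-convex whenever \(0<M<N\).

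Concretely, I would argue that the feasible set of \eqref{e:feasprob1} is not a convex set, which suffices since a problem with a convex (here constant) objective is called convex precisely when its feasible set is convex. First I would fix attention on a single time instant and recall that we are in the regime \(0<M<N\). The cleanest route is to produce two feasible control logics whose convex combination violates a sparsity constraint at some time \(t\). The hard part will be ensuring that both chosen points are genuinely feasible, i.e. that they actually steer every \(\xi_i\) to \(0_{d_i}\) in time \(T\) while respecting \(\norm{\upsilon_T(t)}_0\le M\) at \emph{every} instant; a purely abstract counterexample on the \(\ell_0\)-ball alone does not immediately certify non-convexity of the \emph{constrained} problem, because the midpoint must also be checked against the dynamics.

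To sidestep this difficulty I would instead prove non-convexity at the level of the constraint set geometry, independent of the particular dynamics. The key observation is that once a single feasible solution exists, one can permute which plants are active at a given time to generate a second feasible solution, and the midpoint of two such solutions with disjoint active sets will have more than \(M\) non-zero entries at that time. More robustly, I would simply invoke the structural fact that the constraint \(\norm{\upsilon_T(t)}_0\le M\) with \(0<M<N\) defines a non-convex set: take two vectors \(e\) and \(e'\) in \(\R^N\), each with a single non-zero entry in distinct coordinates (so each has \(\ell_0\)-norm \(1\le M\)); their average \(\tfrac12(e+e')\) has \(\ell_0\)-norm \(2\), and by scaling and choosing \(M+1\) distinct coordinates one obtains points of \(\ell_0\)-norm \(\le M\) whose average has \(\ell_0\)-norm \(M+1>M\). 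This shows the relevant feasible region fails the definition of convexity.

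Putting these together, the proof proceeds as follows: I would state that the dynamics and boundary constraints are affine and therefore convex, then isolate the \(\ell_0\)-constraints as the only remaining constraints, then exhibit the union-of-subspaces structure and the explicit two-point-plus-midpoint witness to conclude that the feasible set is non-convex. The step I expect to be the genuine obstacle is not the algebra but the logical care needed to conclude non-convexity of the \emph{optimization problem} \eqref{e:feasprob1} as stated from non-convexity of the \(\ell_0\)-ball; I would address this by being explicit that a feasibility problem (constant objective) is convex if and only if its feasible set is convex, so a demonstration that the feasible set is non-convex is exactly what is required.
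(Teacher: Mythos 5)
There is a genuine gap, and it is precisely at the point you flagged and then claimed to ``sidestep.'' The paper's own justification is the one-line, standard-form argument: the constraint function \(\upsilon_T(t)\mapsto\norm{\upsilon_T(t)}_0\) is not a convex function, so \eqref{e:feasprob1} as \emph{written} violates the definition of a convex optimization problem (convex objective, convex inequality-constraint functions, affine equality constraints). That is all the Fact asserts. You instead adopt the stronger reading that the \emph{feasible set} of \eqref{e:feasprob1} is non-convex, and your final paragraph declares that this is ``exactly what is required'' --- but you never actually establish it. Showing that the set \(\{v\in\R^N:\norm{v}_0\leq M\}\) is non-convex (your two-point-plus-midpoint witness, which is fine as far as it goes) does not show that its intersection with the affine set cut out by the dynamics and boundary conditions is non-convex: an intersection of a non-convex set with an affine set can perfectly well be convex. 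Indeed, for particular data \((A_i,b_i,\xi_i,M,T)\) the feasible set of \eqref{e:feasprob1} may be empty or a singleton, both of which are convex, so the feasible-set version of the claim is not even true without further hypotheses.

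Your proposed repairs do not close this. The permutation idea --- ``permute which plants are active at a given time to generate a second feasible solution'' --- fails because the plants are heterogeneous: reassigning control values across plants with different \((A_i,b_i)\) and different \(\xi_i\) does not preserve the terminal conditions \(x_i(T)=0_{d_i}\), so the second point need not be feasible. And retreating to ``the constraint set geometry, independent of the particular dynamics'' is exactly the move you correctly identified as insufficient two sentences earlier. The fix is to align the statement being proved with the paper's: argue that \eqref{e:feasprob1} is not a convex \emph{program} because the \(\ell_0\)-constraint is specified by a non-convex function (your coordinate-vector witness shows \(\norm{\cdot}_0\) is not convex, and the sublevel set \(\{v:\norm{v}_0\leq M\}\) is not convex for \(0<M<N\)), and drop the claim about the feasible set altogether.
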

    Notice that since \(\ell_0\)-norm is not a convex function \cite[Chapter 2]{Nagahara_book}, all constraints in \eqref{e:feasprob1} are not convex and the definition of a convex optimization problem is violated. Fact \ref{fact:non-convexity} is immediate. Consequently, computing a solution to \eqref{e:feasprob1} numerically is a difficult task. We describe our results towards achieving a solution to \eqref{e:feasprob1} in this section.
    
    We note that the existence of a solution to \eqref{e:feasprob1} is equivalent to the following:
	\begin{prop}
	\label{prop:mainres1a}
		Suppose that \(U_{\xi_i}\neq\emptyset\) for all \(i=1,2,\ldots,N\) and there exist elements \(\bigl(\overline{u}_i(t)\bigr)_{t=0}^{T-1}\in U_{\xi_i}\), \(i=1,2,\ldots,N\) such that \(\norm{\pmat{\overline{u}_1(t)\\\overline{u}_2(t)\\\vdots\\\overline{u}_N(t)}}_{0}\leq M\) for all \(t=0,1,\ldots,T-1\). Then there exists a solution to the feasibility problem \eqref{e:feasprob1}.
	\end{prop}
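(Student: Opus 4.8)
The plan is to exhibit an explicit feasible point for \eqref{e:feasprob1}; since the objective in \eqref{e:feasprob1} is the constant \(1\), every feasible point is automatically an optimal solution, so producing one suffices. The natural candidate is obtained by stacking the per-plant control sequences supplied by the hypothesis: I would set
\[
    \upsilon_T(t) \Let \pmat{\overline{u}_1(t)\\\overline{u}_2(t)\\\vdots\\\overline{u}_N(t)},\quad t=0,1,\ldots,T-1,
\]
and then argue that this sequence meets every constraint family of \eqref{e:feasprob1}.

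First I would check the dynamics and boundary constraints. The key structural observation is that the plant dynamics \eqref{e:plants} are \emph{decoupled} across \(i\): the evolution of \(x_i(\cdot)\) depends only on the scalar input \(u_i(\cdot)\) and not on the inputs to the other plants. Consequently the first two constraint blocks of \eqref{e:feasprob1} hold plant-by-plant. For each fixed \(i\), the \(i\)-th coordinate sequence of the candidate, namely \(\bigl(\overline{u}_i(t)\bigr)_{t=0}^{T-1}\), belongs to \(U_{\xi_i}\) by hypothesis; by the defining property of \(U_{\xi_i}\), feeding this input to plant \(i\) from \(x_i(0)=\xi_i\) drives its state to \(x_i(T)=0_{d_i}\). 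Since this holds for every \(i=1,2,\ldots,N\), the dynamics constraint together with \(x_i(0)=\xi_i\) and \(x_i(T)=0_{d_i}\) is satisfied.

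Next I would verify the sparsity constraint, which is the only requirement in \eqref{e:feasprob1} that couples the plants. This is immediate from the second hypothesis: by assumption \(\norm{\upsilon_T(t)}_0\leq M\) for all \(t=0,1,\ldots,T-1\), which is exactly the \(\ell_0\)-constraint appearing in \eqref{e:feasprob1}. Combining the two verifications shows that the stacked sequence is feasible, so a solution to \eqref{e:feasprob1} exists.

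I do not expect a genuine obstacle here: the statement is a direct consequence of the definition of \(U_{\xi_i}\) and of the decoupled structure of \eqref{e:plants}. The only point I would state with care is the separation of concerns --- the per-plant sequences already encode the steering of \(\xi_i\) to \(0_{d_i}\) (the dynamics/boundary part, which decouples across \(i\)), whereas the genuinely coupled requirement that at most \(M\) plants carry a non-zero input at any single instant is supplied verbatim by the second hypothesis. For completeness I would also note that the converse is equally direct: from any feasible \(\upsilon_T\) one reads off per-plant coordinate sequences lying in \(U_{\xi_i}\) and satisfying the same bound, which underpins the equivalence asserted in the text preceding the proposition.
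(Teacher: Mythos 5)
Your proposal is correct and coincides with the paper's own justification: the paper likewise notes that the stacked sequence \(\upsilon_T(t)=\pmat{\overline{u}_1(t)\\\vdots\\\overline{u}_N(t)}\) is itself a solution to \eqref{e:feasprob1}, with feasibility following from the definition of \(U_{\xi_i}\) (for the decoupled dynamics and boundary constraints) and the hypothesized \(\ell_0\)-bound (for the coupling constraint). Your additional remark on the converse direction matches the equivalence the paper asserts in the sentence preceding the proposition.
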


	Proposition \ref{prop:mainres1a} states that if there exist \(\bigl(\overline{u}_i(t)\bigr)_{t=0}^{T-1}\in U_{\xi_i}\), \(i=1,2,\ldots,N\) such that for each \(t=0,1,\ldots,T-1\), \(\overline{u}_i(t)\neq 0\) for at most \(M\)-many \(i\in\{1,2,\ldots,N\}\), then the feasibility problem \eqref{e:feasprob1} admits a solution. This assertion is easy to see. In particular, a solution to \eqref{e:feasprob1} is
	\(\upsilon_T(t) = \pmat{u_1(t)\\u_2(t)\\\vdots\\u_N(t)}= \pmat{\overline{u}_1(t)\\\overline{u}_2(t)\\\vdots\\\overline{u}_N(t)}\), \(t=0,1,\ldots,T-1\). An exhaustive search over all elements of \(U_{\xi_i}\), \(i=1,2,\ldots,N\) allows us to find suitable combinations of \(\biggl(\bigl(\overline{u}_i(t)\bigr)_{t=0}^{T-1}\biggr)_{i=1}^{N}\), if exists. 

We now present a set of sufficient conditions on the plant dynamics, \((A_i,b_i)\), \(i=1,2,\ldots,N\), the capacity of the communication network, \(M\), and the time horizon, \(T\), such that \eqref{e:feasprob1} admits a solution and an exhaustive search to compute the same is not needed. We also discuss algorithmic construction of these solutions.  

Notice that we require 
		\[
            \displaystyle{x_i(T)=A_i^T \xi_i +\Phi_i\pmat{\overline{u}_i(0)\\\overline{u}_i(1)\\\vdots\\\overline{u}_i(T-1)} = 0_{d_i}},
        \]
        where \(\Phi_i= \pmat{A_i^{T-1}b_i & A_i^{T-2}b_i & \ldots & A_ib_i & b_i}\). 
        
         Let \(\Psi_i = \pmat{A_i^{d_i-1}b_i & A_i^{d_i-2}b_i & \cdots & A_ib_i & b_i}\), \(i=1,2,\ldots,N\).
    \begin{definition}
    \label{d:reachability}
    \rm{
        We call a plant \(i\) \emph{reachable} if the pair \((A_i,b_i)\in\R^{d_i\times d_i}\times\R^{d_i}\) satisfies 
            \(\text{rank}(\Psi_i) = d_i\).
            }
    \end{definition}
     
    \begin{prop}
    \label{prop:mainres3}
        Suppose that the following conditions hold:
        \begin{enumerate}[label = C\arabic*), leftmargin = *]
            \item\label{condn:c1} Each plant \(i=1,2,\ldots,N\) is reachable.
            \item\label{condn:c2} There exist \(\hat{d}_j\in\N\), \(\P_j\subseteq\{1,2,\ldots,N\}\), \(j=1,2,\ldots,\bigl\lceil\frac{N}{M}\bigr\rceil\) such that
            \begin{enumerate}[label=\alph*), leftmargin=*]
                \item \(\abs{\P_j}\leq M\), \(j=1,2,\ldots,\bigl\lceil\frac{N}{M}\bigr\rceil\),
                \item \(\P_\ell\cap\P_m = \emptyset\) for all \(\ell,m = 1,2,\ldots,\bigl\lceil\frac{N}{M}\bigr\rceil\), \(\ell\neq m\),
                \item \(\displaystyle{\bigcup_{j=1}^{\bigl\lceil\frac{N}{M}\bigr\rceil}}\P_j = \{1,2,\ldots,N\}\),
                \item \(\hat{d}_j>d_i\) for all \(i\in\P_j\), \(j=1,2,\ldots,\bigl\lceil\frac{N}{M}\bigr\rceil\), and
                \item \(\displaystyle{\sum_{j=1}^{\bigl\lceil\frac{N}{M}\bigr\rceil}}\hat{d}_j \leq T\).
            \end{enumerate}
        \end{enumerate}
        Then a control logic, \(\upsilon_T\), obtained from Algorithm \ref{algo:input_design1} is a solution to the feasibility problem \eqref{e:feasprob1}.
    \end{prop}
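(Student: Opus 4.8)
The plan is to make Algorithm \ref{algo:input_design1} concrete as a block-by-block steering scheme and then verify directly that its output meets every constraint of \eqref{e:feasprob1}. By Proposition \ref{prop:mainres1a} it is enough to produce, for each plant $i$, a sequence $\bigl(\overline u_i(t)\bigr)_{t=0}^{T-1}\in U_{\xi_i}$ such that at each time $t$ at most $M$ of these sequences are simultaneously non-zero; so the whole task reduces to exhibiting such sequences and checking the sparsity count. The conditions C1) and C2) are exactly the ingredients needed: C2) tells me how to slice the horizon and which plants to actuate when, and C1) tells me the actuation is feasible.

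First I would use condition C2) to cut the horizon into consecutive blocks. Setting $\sigma_0=0$ and $\sigma_j=\sum_{k=1}^{j}\hat d_k$, condition e) gives $\sigma_{\lceil N/M\rceil}\le T$, so the blocks $B_j=\{\sigma_{j-1},\ldots,\sigma_j-1\}$, $j=1,\ldots,\bigl\lceil\frac{N}{M}\bigr\rceil$, sit inside $\{0,\ldots,T-1\}$ with a possible terminal slack interval. Conditions b) and c) make $\{\P_j\}$ a partition of $\{1,\ldots,N\}$, so each plant $i$ belongs to a unique group $\P_j$; I would actuate plant $i$ only during its own block $B_j$ and set $\overline u_i(t)=0$ for every $t\notin B_j$. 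In particular, outside $B_j$ the plant runs autonomously, so at the start of its block its state is the fixed vector $x_i(\sigma_{j-1})=A_i^{\sigma_{j-1}}\xi_i$.

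Next comes the core steering step for a fixed $i\in\P_j$. Within the block I would solve the linear system $\Phi_i^{(j)}\pmat{\overline u_i(\sigma_{j-1})\\\vdots\\\overline u_i(\sigma_j-1)}=-A_i^{\sigma_j}\xi_i$, where $\Phi_i^{(j)}=\pmat{A_i^{\hat d_j-1}b_i & \cdots & A_ib_i & b_i}$, so that $x_i(\sigma_j)=0_{d_i}$. Condition C1) (Definition \ref{d:reachability}) makes $\Psi_i$ invertible, and the last $d_i$ columns of $\Phi_i^{(j)}$ coincide with $\Psi_i$ whenever $\hat d_j\ge d_i$, which condition d) guarantees; hence $\Phi_i^{(j)}$ has full row rank $d_i$ and the system is consistent, driving plant $i$ to the origin at time $\sigma_j$. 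Since $x_i=0_{d_i}$ together with $u_i=0$ forces $x_i^{+}=0_{d_i}$, the plant stays at the origin through all later blocks and through the terminal slack, giving $x_i(T)=0_{d_i}$; this is precisely the statement $\bigl(\overline u_i(t)\bigr)_{t=0}^{T-1}\in U_{\xi_i}$.

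Finally I would verify the $\ell_0$-constraint and invoke Proposition \ref{prop:mainres1a}. At any time $t\in B_j$ the only plants with possibly non-zero input are those in $\P_j$, and condition a) gives $\abs{\P_j}\le M$; on the terminal slack interval all inputs vanish. Thus $\norm{\upsilon_T(t)}_0\le M$ for every $t=0,1,\ldots,T-1$, and Proposition \ref{prop:mainres1a} certifies that the assembled $\upsilon_T$ solves \eqref{e:feasprob1}. The hard part will be the steering step together with its bookkeeping: I must argue that the autonomous evolution before a plant's block cannot obstruct the subsequent steering — i.e. that a reachable pair can be driven from \emph{any} state to the origin in at most $\hat d_j$ transitions — and I must align the exact counting of transitions versus time indices with the convention used in Algorithm \ref{algo:input_design1}, so that condition d)'s strict inequality $\hat d_j>d_i$ supplies exactly the number of control inputs the construction consumes.
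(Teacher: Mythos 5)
Your proposal is correct and follows essentially the same route as the paper's proof: partition the horizon into consecutive blocks of lengths $\hat d_j$, actuate each plant only during its group's block, use reachability (invertibility of $\Psi_i$, hence surjectivity of the $\hat d_j$-step reachability matrix) to steer the autonomously-propagated state $A_i^{\sigma_{j-1}}\xi_i$ to the origin within the block, and count $\norm{\upsilon_T(t)}_0\le\abs{\P_j}\le M$. The only cosmetic difference is that the paper packages the steering step as a citation to a lemma of Nagahara et al.\ with the explicit input $\bigl(0_{\hat d_j-d_i},\,-\Psi_i^{-1}A_i^{\hat d_j}A_i^{\tilde d_j}\xi_i\bigr)$ and verifies the feasibility constraints directly rather than via Proposition \ref{prop:mainres1a}; your full-row-rank consistency argument already supplies the fact you flag as the remaining ``hard part.''
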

    \begin{algorithm*}[htbp]
			\caption{Construction of a control logic, \(\upsilon_T\), when conditions \ref{condn:c1}-\ref{condn:c2} are satisfied} \label{algo:input_design1}
		\begin{algorithmic}[1]
			\renewcommand{\algorithmicrequire}{\textbf{Input:}}
			\renewcommand{\algorithmicensure}{\textbf{Output:}}
			
			\REQUIRE The plant dynamics, \((A_i,b_i)\), \(i=1,2,\ldots,N\), the capacity of the communication network, \(M\), the initial states, \(x_i(0) = \xi_i\), \(i=1,2,\ldots,N\), time horizon, \(T\), the numbers, \(\hat{d}_j\), \(j=1,2,\ldots,\bigl\lceil\frac{N}{M}\bigr\rceil\) and the sets, \(\P_j\), \(j=1,2,\ldots,\bigl\lceil\frac{N}{M}\bigr\rceil\).
			\ENSURE A control logic, \(\upsilon_T\).
			
			\FOR {\(j=1,2,\ldots,\bigl\lceil\frac{N}{M}\bigr\rceil\)}
                \STATE Compute \(\displaystyle{\tilde{d}_j = \sum_{k=1}^{j-1}\hat{d}_k}\).
                \FOR {each \(i\in\P_j\)}
                    \STATE Set \begin{align}
                    			\label{e:u_comp1}
                    		\displaystyle{\pmat{u_i\bigl(\tilde{d}_j+0\bigr)\\\vdots\\u_i\bigl(\tilde{d}_j+\hat{d}_j-1\bigr)} = \pmat{0_{\hat{d}_j-d_i}\\-\Psi_i^{-1}A_i^{\hat{d}_j}A_i^{\tilde{d}_j}{\xi}_i}}
				\end{align} 
                    and \({u}_i(\tau) = 0\) for all \(\tau\in\{0,1,\ldots,T-1\}\setminus\bigl\{\tilde{d}_j+0,\ldots,\tilde{d}_j+\hat{d}_j-1\bigr\}\).
                \ENDFOR
            \ENDFOR
                 \STATE Output \(\upsilon_T(t) = \pmat{u_1(t)\\u_2(t)\\\vdots\\u_N(t)}\), \(t=0,1,\ldots,T-1\).
		\end{algorithmic}
	\end{algorithm*}

    Proposition \ref{prop:mainres3} provides our first set of sufficient conditions under which the existence of a solution to the feasibility problem \eqref{e:feasprob1} is guaranteed. This solution can be computed by employing Algorithm \ref{algo:input_design1}. We split the set of all plants \(\{1,2,\ldots,N\}\) into disjoint subsets, \(\P_j\), \(j=1,2,\ldots,\bigl\lceil\frac{N}{M}\bigr\rceil\) and under reachability assumption use \(\hat{d}_j\), \(j=1,2,\ldots,\bigl\lceil\frac{N}{M}\bigr\rceil\) as the time duration for steering the state of the elements in \(\P_j\) to zero with at most \(d_i\)-many non-zero control inputs. The cardinality of each \(\P_j\) ensures that no more than \(M\) plants have non-zero control inputs at any time instant \(t\) and the upper bound on the sum of \(\hat{d}_j\), \(j=1,2,\ldots,\bigl\lceil\frac{N}{M}\bigr\rceil\) ensures that the given initial states of all plants are steered to zero in the given time horizon \(T\). We present a proof of Proposition \ref{prop:mainres3} in \S \ref{s:proofs}.

    \begin{prop}
    \label{prop:mainres4}
        Suppose that the following conditions hold:
        \begin{enumerate}[label=D\arabic*), leftmargin = *]
            \item\label{condn:d1} Each plant \(i=1,2,\ldots,N\) is reachable.
            \item\label{condn:d2} There exist \(\N\ni\hat{d}_i>d_i\) for all \(i=1,2,\ldots,N\) and \(\Nset_j\subseteq\{1,2,\ldots,N\}\), \(j=1,2,\ldots,p\) with \(p\leq M\) such that
            \begin{enumerate}[label=\alph*), leftmargin =*]
                \item \(\Nset_m\cap\Nset_n=\emptyset\) for all \(m,n=1,2,\ldots,p\), \(m\neq n\),
                \item \(\displaystyle{\bigcup_{j=1}^{p}\Nset_j=\{1,2,\ldots,N\}}\), and
                \item \(\displaystyle{\sum_{i\in\Nset_j}}\hat{d}_i\leq T\) for all \(\Nset_j\), \(j=1,2,\ldots,p\).
            \end{enumerate}
        \end{enumerate}
        Then a control logic, \(\upsilon_T\), obtained from Algorithm \ref{algo:input_design2} is a solution to the feasibility problem \eqref{e:feasprob1}.
    \end{prop}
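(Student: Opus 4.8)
The plan is to verify directly that the control logic $\upsilon_T$ produced by Algorithm \ref{algo:input_design2} meets both requirements of a solution to the feasibility problem \eqref{e:feasprob1}: (i) for each plant $i=1,2,\ldots,N$ the input sequence $\bigl(u_i(t)\bigr)_{t=0}^{T-1}$ steers $\xi_i$ to $0_{d_i}$ within the horizon $T$, and (ii) at every instant $t=0,1,\ldots,T-1$ at most $M$ components of $\upsilon_T(t)$ are non-zero. The argument mirrors the proof of Proposition \ref{prop:mainres3}, with the roles of time and grouping interchanged: whereas Proposition \ref{prop:mainres3} drives $\lceil N/M\rceil$ batches of plants simultaneously in consecutive time blocks, here the $p\le M$ groups $\Nset_1,\ldots,\Nset_p$ are run in parallel and, within each group, the plants are driven one at a time in disjoint blocks of instants.

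First I would make precise the block assignment underlying the algorithm. Fixing a group $\Nset_j$ together with an ordering of its elements, I assign to each $i\in\Nset_j$ the block $W_i\Let\{s_i,s_i+1,\ldots,s_i+\hat{d}_i-1\}$, where $s_i$ is the sum of $\hat{d}_{i'}$ over the plants $i'$ preceding $i$ in $\Nset_j$. Condition \ref{condn:d2}c) gives $s_i+\hat{d}_i\le\sum_{i'\in\Nset_j}\hat{d}_{i'}\le T$, so $W_i\subseteq\{0,1,\ldots,T-1\}$. Since the input of plant $i$ vanishes outside $W_i$, plant $i$ evolves in open loop up to time $s_i$, so $x_i(s_i)=A_i^{s_i}\xi_i$. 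Propagating \eqref{e:plants} across $W_i$ and collecting the $d_i$ non-zero inputs, placed in the final $d_i$ slots, gives $x_i(s_i+\hat{d}_i)=A_i^{\hat{d}_i}A_i^{s_i}\xi_i+\Psi_i w_i$ with $w_i$ the vector of those inputs. By condition \ref{condn:d1} (reachability) $\Psi_i$ is invertible, and the choice $w_i=-\Psi_i^{-1}A_i^{\hat{d}_i}A_i^{s_i}\xi_i$ — the analogue of \eqref{e:u_comp1} with the group duration $\hat{d}_j$ and start $\tilde{d}_j$ replaced by the per-plant duration $\hat{d}_i$ and start $s_i$ — forces $x_i(s_i+\hat{d}_i)=0_{d_i}$. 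As the input is zero thereafter, $x_i(T)=0_{d_i}$, which settles (i).

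Next I would establish the sparsity bound (ii). The key observation is that within any single group $\Nset_j$ the blocks $W_i$, $i\in\Nset_j$, are pairwise disjoint, so at a fixed instant $t$ at most one plant of $\Nset_j$ has $t\in W_i$, and therefore at most one plant of $\Nset_j$ carries a non-zero input at time $t$. Since conditions \ref{condn:d2}a) and \ref{condn:d2}b) make $\Nset_1,\ldots,\Nset_p$ a partition of $\{1,2,\ldots,N\}$, the number of plants with $u_i(t)\neq 0$ is at most the number of groups, namely $p$. As $p\le M$, we obtain $\norm{\upsilon_T(t)}_0\le p\le M$ for every $t$, which is precisely the $\ell_0$-constraint in \eqref{e:feasprob1}; combining with (i) shows that $\upsilon_T$ solves \eqref{e:feasprob1}.

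The linear-algebraic content of (i) is routine once reachability supplies the invertibility of $\Psi_i$; the step that needs the most care is the bookkeeping for (ii), i.e.\ confirming that the parallel-across-groups, sequential-within-group schedule never activates two plants of the same group at once. I therefore expect the main obstacle to be a clean verification that the intra-group blocks $W_i$ are disjoint and that the groups partition the plant index set, since it is exactly here that the hypothesis $p\le M$ — rather than the $\lceil N/M\rceil$ blocks of Proposition \ref{prop:mainres3} — is consumed, and this is what distinguishes Proposition \ref{prop:mainres4} from its companion.
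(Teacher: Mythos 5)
Your proposal is correct and follows essentially the same route as the paper's proof: it verifies that the per-group sequential blocks steer each state to zero within the horizon (the paper delegates this step to Lemma \ref{lem:auxres1}, whereas you rederive the identity $x_i(s_i+\hat{d}_i)=A_i^{\hat{d}_i}A_i^{s_i}\xi_i+\Psi_i w_i$ directly), and then obtains the $\ell_0$-bound from the intra-group disjointness of the blocks together with $p\leq M$. The only difference is expository detail, not substance.
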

     \begin{algorithm*}[htbp]
			\caption{Construction of a control logic, \(\upsilon_T\) when conditions \ref{condn:d1}-\ref{condn:d2} are satisfied} \label{algo:input_design2}
		\begin{algorithmic}[1]
			\renewcommand{\algorithmicrequire}{\textbf{Input:}}
			\renewcommand{\algorithmicensure}{\textbf{Output:}}
			
			\REQUIRE The plant dynamics, \((A_i,b_i)\), \(i=1,2,\ldots,N\), the initial states, \(x_i(0) = \xi_i\), \(i=1,2,\ldots,N\), time horizon, \(T\), the integers, \(\hat{d}_i\), \(i=1,2,\ldots,N\) and the sets, \(\Nset_j\), \(j=1,2,\ldots,p\).
			\ENSURE A control logic, \(\upsilon_T\).
			
			Let \(\Nset_j(k)\) denote the \(k\)-th element of the set \(\Nset_j\), \(k=1,2,\ldots,\abs{\Nset_j}\).
			
			\FOR {each \(j=1,2,\ldots,p\)}
                \FOR {each \(k=1,2,\ldots,\abs{\Nset_j}\)}
                    \STATE Compute \(\displaystyle{\tilde{d}_{\Nset_j(k)} = \sum_{\ell=1}^{k-1}\hat{d}_{\Nset_j(\ell)}}\).
                    \STATE Set 
                    \begin{align}
                    \label{e:u_comp2}
                    	\pmat{u_{\Nset_j(k)}\bigl(\tilde{d}_{\Nset_j(k)}+0\bigr)\\\vdots\\u_{\Nset_j(k)}\bigl(\tilde{d}_{\Nset_j(k)}+\hat{d}_{\Nset_j(k)}-1\bigr)} = \pmat{0_{\hat{d}_{\Nset_j(k)}-d_{\Nset_j(k)}}\\-\Psi_{\Nset_j(k)}^{-1}A_{\Nset_j(k)}^{\hat{d}_{\Nset_j(k)}}A_{\Nset_j(k)}^{\tilde{d}_{\Nset_j(k)}}{\xi}_{\Nset_j(k)}}
                    \end{align}
                    and \({u}_{\Nset_j(k)}(\tau) = 0\) for all \(\tau\in\{0,1,\ldots,T-1\}\setminus\bigl\{\tilde{d}_{\Nset_j(k)}+0,\ldots,\tilde{d}_{\Nset_j(k)}+\hat{d}_{\Nset_j(k)}-1\bigr\}\).
                \ENDFOR
            \ENDFOR
                 \STATE Output \(\upsilon_T(t) = \pmat{u_1(t)\\u_2(t)\\\vdots\\u_N(t)}\), \(t=0,1,\ldots,T-1\).
		\end{algorithmic}
	\end{algorithm*}

    The purpose of Proposition \ref{prop:mainres4} is to provide our second set of sufficient conditions under which the feasibility problem \eqref{e:feasprob1} admits a solution. This solution can be computed by employing Algorithm \ref{algo:input_design2}. Under reachability assumption for each plant \(i=1,2,\ldots,N\), we use \(\hat{d}_i\) as the time duration for steering the state of the plant to \(0_{d_i}\) with at most \(d_i\)-many non-zero control inputs. We split the set of all plants \(\{1,2,\ldots,N\}\) into at most \(M\) disjoint subsets, \(\Nset_j\), \(j=1,2,\ldots,p\:(\leq M)\) such that the sum of \(\hat{d}_i\) for all elements \(i\) in any \(\Nset_j\) does not exceed \(T\). The number of \(\Nset_j\), \(j=1,2,\ldots,p\) in use ensures that not more than \(M\) plants have non-zero control inputs at any time instant \(t\) and the upper bound on the sum of \(\hat{d}_i\), \(i\in\Nset_j\), \(j\in\{1,2,\ldots,p\}\) ensures that the given initial states of all plants are steered to zero in the given time horizon \(T\). We present a proof of Proposition \ref{prop:mainres4} in \S \ref{s:proofs}.
    
    At this point the reader may wonder about the connection and comparison between Proposition \ref{prop:mainres3} (Algorithm \ref{algo:input_design1}) and Proposition \ref{prop:mainres4} (Algorithm \ref{algo:input_design2}). Consider a matrix, \(\mathcal{M}\), with \(M\) rows and \(T\) columns. Suppose that we want to fill in the entries of \(\mathcal{M}\) with elements from the set \(\{1,2,\ldots,N\}\) obeying that each element must appear for a pre-specified number of consecutive instances \(\hat{d}_j\), \(j\in\{1,2,\ldots,\bigl\lceil\frac{N}{M}\bigr\rceil\}\) (resp., \(\hat{d}_i\), \(i\in\{1,2,\ldots,N\}\)). Algorithm \ref{algo:input_design1} splits the columns into segments of length \(\hat{d}_1\), \(\hat{d}_2,\ldots\), \(\hat{d}_{\big\lceil\frac{N}{M}\bigr\rceil}\) and fills in the elements from \(\P_1\), \(\P_2,\ldots\), \(\P_{\big\lceil\frac{N}{M}\bigr\rceil}\), respectively, while Algorithm \ref{algo:input_design2} assigns elements from \(\Nset_1\), \(\Nset_2,\ldots\), \(\Nset_p\) for the corresponding \(\hat{d}_i\), \(i=1,2,\ldots,N\) duration of time along the rows \(1,2,\ldots,p\:(\leq M)\), respectively. Propositions \ref{prop:mainres3} and \ref{prop:mainres4} are also algebraically related. If \(\abs{\Nset_j}=\bigl\lceil\frac{N}{M}\bigr\rceil\) for all \(j=1,2,\ldots,p\), then we can pick \(\P_k = \{\Nset_1(k),\Nset_2(k),\ldots,\Nset_p(k)\}\) and \(\hat{d}_k = \max\{\hat{d}_{\Nset_1(k)},\)\\\(\hat{d}_{\Nset_2(k)},\ldots,\hat{d}_{\Nset_p(k)}\}\), \(k=1,2,\ldots,\bigl\lceil\frac{N}{M}\bigr\rceil\). Given a time horizon, \(T\), it is clear that Proposition \ref{prop:mainres4} can cater to a bigger \(N\) compared to Proposition \ref{prop:mainres3} as the choice of \(\hat{d}_i\) is specific to \(i\) and not a maximal value over a set of \(i\)'s which can be of different dimension, \(d_i\). In other words, Proposition \ref{prop:mainres4} is more useful when \(d_i\neq d\) for all \(i=1,2,\ldots,N\). Consider, for example, an NCS with \(N=4\) and \(M=2\). Let \(d_1=1\), \(d_2=2\), \(d_3=3\) and \(d_4=4\). Let \(T=7\). We have that Proposition \ref{prop:mainres4} holds with \(\hat{d}_1=2\), \(\hat{d}_2=3\), \(\hat{d}_3=4\), \(\hat{d}_4=5\) and \(\Nset_1=\{1,4\}\), \(\Nset_2=\{2,3\}\). However, for any choice of disjoint \(\P_1\), \(\P_2\subseteq\{1,2,3,4\}\), we cannot choose \(\hat{d}_1\), \(\hat{d}_2\) such that \(\hat{d}_1+\hat{d}_2\leq 7\).  
		
	\begin{remark}
	\label{rem:zero_entries}
    \rm{
        We note the following feature of a scheduling and control logic, \((\gamma_T,\u_T)\), obtained from Algorithm \ref{algo:sched-con_design}, where the control logic, \(\u_T\), is computed by employing Algorithm \ref{algo:input_design1} (resp., Algorithm \ref{algo:input_design2}): For each plant \(i\in\{1,2,\ldots,N\}\), we have at most \(d_i\)-many non-zero control inputs in a \(\hat{d}_j\), \(j\in\{1,2,\ldots,\bigl\lceil\frac{N}{M}\bigr\rceil\}\) (resp., \(\hat{d}_i\)) duration of time. The remaining at least \(\hat{d}_j-d_i\) (resp., \(\hat{d}_i-d_i\)) elements are \(0\). Consequently, for the scheduling logic, \(\gamma_T\), there exist time instants \(\tau\in\{0,1,\ldots,T-1\}\) such that \(\gamma_T(\tau)=\emptyset\), i.e., no plant has access to the shared communication network. 
    }
	\end{remark}

    Our next set of results is aimed towards arriving at a solution to the feasibility problem \eqref{e:feasprob1} by solving a set of convex optimization problems. We provide sufficient conditions on the plant dynamics, \((A_i,b_i)\), \(i=1,2,\ldots,N\), capacity of the communication network, \(M\) and the time horizon, \(T\), under which this is achievable. Prior to presenting these results, we catalog a few definitions.
    
    \begin{definition}
    \label{d:sparse}
    \rm{
        A vector \(z\in\R^T\) is called \emph{\(s\)-sparse} if it satisfies \(\norm{z}_0\leq s\).
    }
    \end{definition}
    Let \(\Sigma_s\) denote the set of all \(s\)-sparse vectors \(z\in R^{T}\).
    
    \begin{definition}
    \label{d:rip_mat}
    \rm{
        A matrix \(\Gamma\in\R^{p\times T}\) satisfies the \emph{restricted isometry property of order \(s\)} if there exists \(\delta_s\in]0,1[\) such that
        \[
            (1-\delta_s)\norm{z}_2^2\leq\norm{\Gamma z}_2^2\leq (1+\delta_s)\norm{z}_2^2
        \]
        holds for all \(z\in\Sigma_s\).
    }
    \end{definition}
    
    \begin{prop}
    \label{prop:mainres5}
        Suppose that the following conditions hold:
        \begin{enumerate}[label=E\arabic*),leftmargin=*]
            \item\label{condn:e1} Each plant \(i=1,2,\ldots,N\) is reachable.
            \item\label{condn:e2} \(T > d_i\) for each \(i=1,2,\ldots,N\).
            \item\label{condn:e3} For each \(i=1,2,\ldots,N\), the solution, \(\bigl(\overline{u}_i(t)\bigr)_{t=0}^{T-1}\), to the optimization problem
                \begin{align}
                \label{e:optprob1}
                    \underset{\bigl(u_i(t)\bigr)_{t=0}^{T-1}\in U_{\xi_i}}\minimize&\:\norm{\pmat{u_i(0)\\u_i(1)\\\vdots\\u_i(T-1)}}_{0}
                \end{align}
                is \(s_i\)-sparse.
            \item\label{condn:e4} There exist \(\mathcal{R}_j\subset\{1,2,\ldots,N\}\), \(j=1,2,\ldots,M\) satisfying
            \begin{enumerate}[label=\alph*),leftmargin = *]
                \item \(\mathcal{R}_m\cap\mathcal{R}_n=\emptyset\) for all \(m,n=1,2,\ldots,M\), \(m\neq n\),
                \item \(\displaystyle{\bigcup_{j=1}^{M}\mathcal{R}_j}=\{1,2,\ldots,N\}\),
                \item for each \(j=1,2,\ldots,M\), \(\displaystyle{\sum_{i\in\mathcal{R}_j}s_i\leq T}\).
            \end{enumerate}
        \end{enumerate}
        Then the control logic, \(\upsilon_T(t) = \pmat{\overline{u}_1(t)\\\overline{u}_2(t)\\\vdots\\\overline{u}_N(t)}\), \(t=0,1,\ldots,T-1\)
        is a solution to the feasibility problem \eqref{e:feasprob1}.
    \end{prop}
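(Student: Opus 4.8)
The plan is to verify directly that the proposed control logic, \(\upsilon_T\), satisfies both families of constraints in the feasibility problem \eqref{e:feasprob1}: (i) the dynamics together with the boundary conditions \(x_i(0)=\xi_i\), \(x_i(T)=0_{d_i}\) for each plant, and (ii) the sparsity constraint \(\norm{\upsilon_T(t)}_0\le M\) at every \(t=0,1,\ldots,T-1\). Constraint (i) is immediate: by \ref{condn:e3} each \(\bigl(\overline{u}_i(t)\bigr)_{t=0}^{T-1}\) is a minimizer of \eqref{e:optprob1}, whose feasible set is exactly \(U_{\xi_i}\); hence \(\bigl(\overline{u}_i(t)\bigr)_{t=0}^{T-1}\in U_{\xi_i}\) and, by the very definition of \(U_{\xi_i}\), steers \(\xi_i\) to \(0_{d_i}\) in \(T\) steps. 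So the entire content of the proof lies in establishing the column-sparsity constraint (ii).

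For (ii) I would mimic the track-based packing underlying Proposition \ref{prop:mainres4} and Algorithm \ref{algo:input_design2}, now measuring each plant's footprint by its minimal sparsity \(s_i\) rather than by a padded window length. Read \(\upsilon_T\) as an \(N\times T\) array whose \(t\)-th column is \(\upsilon_T(t)\); the goal is that every column carries at most \(M\) non-zero entries. Assign block \(\mathcal{R}_j\) from \ref{condn:e4} to ``track'' \(j\), \(j=1,2,\ldots,M\). Condition \ref{condn:e4}c), \(\sum_{i\in\mathcal{R}_j}s_i\le T\), says that within track \(j\) the total number of non-zero inputs demanded by its plants does not exceed the number \(T\) of available time slots, so these inputs can be laid into pairwise disjoint slots inside \(\{0,1,\ldots,T-1\}\). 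With such a disjoint arrangement, at each time \(t\) every track contributes at most one active plant, and since the blocks \(\mathcal{R}_j\) partition \(\{1,2,\ldots,N\}\) (conditions \ref{condn:e4}a)--b)) there are \(M\) tracks; hence \(\norm{\upsilon_T(t)}_0\le M\) for all \(t\).

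The delicate step --- and the one I expect to be the main obstacle --- is justifying that each plant's control can indeed be confined to its allotted disjoint slot while (a) still driving \(\xi_i\) to \(0_{d_i}\) by time \(T\) and (b) using no more than its minimal count \(s_i\) of non-zero inputs. Here reachability \ref{condn:e1} and the margin \ref{condn:e2}, \(T>d_i\), are essential: since \(\Psi_i\) is invertible, the terminal condition \(x_i(T)=A_i^T\xi_i+\Phi_i\bigl(\overline{u}_i(0),\ldots,\overline{u}_i(T-1)\bigr)^{\!\top}=0_{d_i}\) can be met using inputs supported on a chosen window, exactly as in the closed-form assignment \eqref{e:u_comp2}, with the residual slots set to zero so that the state, once brought to \(0_{d_i}\), remains there. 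I would first settle the generic case \(s_i=d_i\), where the minimizer of \eqref{e:optprob1} occupies a length-\(d_i\) window and the packing coincides with Algorithm \ref{algo:input_design2} (now with window length \(s_i\) and no padding); I would then treat \(s_i<d_i\) by arguing that the support of the minimizer from \eqref{e:optprob1} can be relocated into the track's reserved \(s_i\) slots without increasing its \(\ell_0\)-norm, again leaning on invertibility of \(\Psi_i\) and on \(T>d_i\) to supply the requisite placement freedom. Checking that this relocation simultaneously preserves \(U_{\xi_i}\)-membership and the sparsity count across all plants of a track, so that the disjointness within each track is genuinely achievable, is the crux of the argument.
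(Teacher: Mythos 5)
Your treatment of the dynamics and boundary constraints is correct and coincides with the paper's: membership of each \(\bigl(\overline{u}_i(t)\bigr)_{t=0}^{T-1}\) in \(U_{\xi_i}\) already gives \(x_i(T)=0_{d_i}\). The gap is in part (ii). The proposition asserts that the \emph{specific} control logic \(\upsilon_T(t)=\pmat{\overline{u}_1(t)\\\vdots\\\overline{u}_N(t)}\) assembled from the minimizers of \eqref{e:optprob1} satisfies \(\norm{\upsilon_T(t)}_0\leq M\); your plan is instead to \emph{relocate} the supports of these minimizers into disjoint time slots within each track \(\mathcal{R}_j\). Even if that relocation were carried out, it would produce different input sequences and hence a different control logic, so it would at best prove existence of some feasible \(\upsilon_T\) rather than feasibility of the one named in the statement. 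Moreover, the relocation itself does not go through in general: the support of an \(s_i\)-sparse element of \(U_{\xi_i}\) is constrained by which \(s_i\) columns of \(\Phi_i\) can represent \(-A_i^T\xi_i\) (a system of \(d_i\) linear equations in \(s_i\) unknowns), so for \(s_i\leq d_i\) it cannot be moved to an arbitrary reserved block of \(s_i\) slots; you correctly flag this as ``the crux'' but leave it unresolved.

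For comparison, the paper's proof performs no repacking at all. It reads condition \ref{condn:e4} --- as elaborated in the discussion following the proposition --- as already including the requirement that within each \(\mathcal{R}_j\) at most one plant has a non-zero input at any given time \(t\), i.e., the supports of \(\bigl(\overline{u}_i(t)\bigr)_{t=0}^{T-1}\), \(i\in\mathcal{R}_j\), are pairwise disjoint; the bound \(\norm{\upsilon_T(t)}_0\leq M\) is then immediate from the fact that there are \(M\) blocks. The counting inequality \(\sum_{i\in\mathcal{R}_j}s_i\leq T\) is only a necessary consequence of such disjointness, not a sufficient condition for it, and your attempt to upgrade it to sufficiency is precisely where the argument breaks down.
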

    
    Proposition \ref{prop:mainres5} gives sufficient conditions for the existence of a solution to the feasibility problem \eqref{e:feasprob1} in terms of solutions to the optimization problem \eqref{e:optprob1} that satisfy certain conditions. We work under the assumption that each plant \(i=1,2,\ldots,N\) is reachable and the time horizon \(T>d_i\) for each plant \(i=1,2,\ldots,N\). If solutions \(\biggl(\bigl(\overline{u}_i(t)\bigr)_{t=0}^{T-1}\biggr)_{i=1}^{N}\) to the optimization problem \eqref{e:optprob1} are \(s_i\)-sparse, \(i=1,2,\ldots,N\) and all plants \(i=1,2,\ldots,N\) can be combined into \(M\) disjoint sets, \(\mathcal{R}_j\subseteq\{1,2,\ldots,N\}\) such that i) for each \(j=1,2,\ldots,M\) and \(t=0,1,\ldots,T-1\), at most one element in each \(\mathcal{R}_j\) has a non-zero value of \(\overline{u}_i(t)\) and ii) for each \(j=1,2,\ldots,M\), the sum of the number of non-zero elements in \(\bigl(\overline{u}_i(t)\bigr)_{t=0}^{T-1}\) over all \(i\in\mathcal{R}_j\) does not exceed the given time horizon, \(T\), then the control logic, \(\upsilon_T(t) = \pmat{\overline{u}_1(t)\\\overline{u}_2(t)\\\vdots\\\overline{u}_N(t)}\), \(t=0,1,\ldots,T-1\), is a solution to the feasibility problem \eqref{e:feasprob1}. We present a proof of Proposition \ref{prop:mainres5} in \S\ref{s:proofs}. Notice that \eqref{e:optprob1} is a non-convex optimization problem as its objective function is non-convex. 
    
    The following result asserts that under certain conditions on the plant dynamics, \((A_i,b_i)\), \(i=1,2,\ldots,N\), capacity of the communication network, \(M\) and the time horizon, \(T\), the control logic, \(\upsilon_T\), described in Proposition \ref{prop:mainres5} can be obtained by solving a set of convex optimization problems. 
    \begin{prop}
    \label{prop:mainres6}
        Suppose that the following conditions hold:
        \begin{enumerate}[label=F\arabic*),leftmargin=*]
            \item\label{condn:f1-2} Conditions \ref{condn:e1} and \ref{condn:e2} hold.
            \item\label{condn:f3} The \(\bigl(s_i\bigr)_{i=1}^{N}\)-sparse solutions \(\biggl(\bigl(\overline{u}_i(t)\bigr)_{t=0}^{T-1}\biggr)_{i=1}^{N}\) to the optimization problem \eqref{e:optprob1} that satisfy condition \ref{condn:e4} are the unique solutions to \eqref{e:optprob1}. 
            \item\label{condn:f4} For each plant \(i=1,2,\ldots,N\), the matrix, \(\Phi_i\), satisfies the restricted isometry property of order \(2s_i\) with \(\delta_{2s_i}<\sqrt{2}-1\).
        \end{enumerate}
        Then for each plant \(i=1,2,\ldots,N\), \(\bigl(\overline{u}_i(t)\bigr)_{t=0}^{T-1}\) is the solution to the convex optimization problem
        \begin{align}
        \label{e:optprob2}
            \underset{\bigl(u_i(t)\bigr)_{t=0}^{T-1}\in U_{\xi_i}}\minimize&\:\norm{\pmat{u_i(0)\\u_i(1)\\\vdots\\u_i(T-1)}}_{1}.
        \end{align}
    \end{prop}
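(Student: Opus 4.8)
The plan is to recognize Proposition \ref{prop:mainres6} as an instance of exact sparse recovery by $\ell_1$-minimization and to reduce it to the classical restricted-isometry-property recovery theorem from compressed sensing. First I would rewrite the feasible sets of the two optimization problems \eqref{e:optprob1} and \eqref{e:optprob2} in the measurement form already exhibited in the excerpt: the terminal constraint $x_i(T)=0_{d_i}$ is equivalent to $\Phi_i\pmat{u_i(0)\\\vdots\\u_i(T-1)}=-A_i^T\xi_i$, so for each plant $i$ the set $U_{\xi_i}$ is precisely the affine solution set $\bigl\{u\in\R^T : \Phi_i u = -A_i^T\xi_i\bigr\}$. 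Both \eqref{e:optprob1} and \eqref{e:optprob2} minimize over this same set and differ only in the objective ($\ell_0$-norm versus $\ell_1$-norm). Conditions \ref{condn:e1} and \ref{condn:e2} (reachability and $T>d_i$) guarantee that $\Phi_i\in\R^{d_i\times T}$ has full row rank $d_i$ and strictly more columns than rows, so the feasible set is nonempty and underdetermined, which is exactly the regime in which compressed-sensing recovery is meaningful.

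Next I would fix the sparse target and invoke the recovery theorem. By condition \ref{condn:f3}, for each plant $i$ the vector $\bigl(\overline{u}_i(t)\bigr)_{t=0}^{T-1}$ is the unique solution of the $\ell_0$-problem \eqref{e:optprob1} and is $s_i$-sparse; in particular $\Phi_i\overline{u}_i = -A_i^T\xi_i$ with $\norm{\overline{u}_i}_0\leq s_i$. The main tool is the exact-recovery theorem of compressed sensing \cite{Nagahara_book}: if a matrix $\Phi$ satisfies the restricted isometry property of order $2s$ with $\delta_{2s}<\sqrt{2}-1$, then for every $s$-sparse vector $\overline{u}$ the program $\min\norm{u}_1$ subject to $\Phi u = \Phi\overline{u}$ has $\overline{u}$ as its unique solution. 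I would apply this with $\Phi=\Phi_i$, $s=s_i$ and target $\overline{u}_i$: condition \ref{condn:f4} supplies exactly the hypothesis $\delta_{2s_i}<\sqrt{2}-1$, while the measurement vector is $-A_i^T\xi_i = \Phi_i\overline{u}_i$. The theorem then yields that $\overline{u}_i$ is the unique minimizer of $\min\norm{u}_1$ subject to $\Phi_i u = -A_i^T\xi_i$, i.e.\ of \eqref{e:optprob2}. Performing this for each $i=1,2,\ldots,N$ completes the argument.

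The step that carries all the real weight is the invocation of the RIP recovery theorem; everything else is a routine translation of the terminal-state constraint into a measurement equation. Since that theorem is standard and the threshold $\sqrt{2}-1$ in \ref{condn:f4} is precisely its sharp constant, the only genuine care needed is to check that the $\ell_0$- and $\ell_1$-programs share the identical affine feasible set $U_{\xi_i}$ and that the target $\overline{u}_i$ is genuinely $s_i$-sparse, so that the hypotheses of the theorem are met verbatim. I would also remark that \ref{condn:f4} with $\delta_{2s_i}<1$ already forces the $s_i$-sparse solution of $\Phi_i u = -A_i^T\xi_i$ to be unique, which is consistent with---and indeed subsumes---the uniqueness asserted in \ref{condn:f3}.
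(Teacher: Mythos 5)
Your proposal is correct and follows essentially the same route as the paper: the paper likewise rewrites the feasibility constraint as the affine measurement equation \(\Phi_i u = -A_i^T\xi_i\) and then appeals to the \(\ell_0\)--\(\ell_1\) equivalence under the restricted isometry property with \(\delta_{2s_i}<\sqrt{2}-1\) (citing \cite[Theorem 1]{Nagahara2016b}, which packages the same compressed-sensing recovery result you invoke directly). Your closing observation that \ref{condn:f4} already forces uniqueness of the \(s_i\)-sparse solution is a valid extra remark not made in the paper, but it does not change the argument.
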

    
    Proposition \ref{prop:mainres6} provides sufficient conditions on the matrices, \(\Phi_i\), \(i=1,2,\ldots,N\) under which the unique solutions, \(\biggl(\bigl(\overline{u}_i(t)\bigr)_{t=0}^{T-1}\biggr)_{i=1}^{N}\), to the optimization problem \eqref{e:optprob1} that together satisfy condition \ref{condn:e4} can be obtained by solving the optimization problem \eqref{e:optprob2} for each plant \(i=1,2,\ldots,N\). The optimization problem \eqref{e:optprob2} is convex by convexity of the objective function and the set of decision variables, see proof of Proposition \ref{prop:mainres5} for details. Notice that uniqueness of \(\bigl(\overline{u}_i(t)\bigr)_{t=0}^{T-1}\) is governed by \(\Phi_i\) and \(\xi_i\), \(i=1,2,\ldots,N\). Further, the properties of \(\Phi_i\), by its definition, rely on the plant dynamics, \((A_i,b_i)\), \(i=1,2,\ldots,N\), capacity of the communication network, \(M\) and the time horizon, \(T\). We present a proof of Proposition \ref{prop:mainres6} in \S\ref{s:proofs}.
 
   To summarize, in this section we discussed methods and conditions for solving the feasibility problem \eqref{e:feasprob1} that is at the heart of our design of scheduling and control logic in Algorithm \ref{algo:sched-con_design}.
	\begin{remark}
	\label{rem:hands-off}
    \rm{
        Consider a linear system \(x(t+1)=Ax(t)+bu(t)\), \(t=0,1,\ldots,T-1\). A sparsest control sequence, \(\bigl(u(t)\bigr)_{t=0}^{T-1}\), that steers a given \(x(0)=\xi\) to \(x(T) = 0\) is called a \emph{maximum hands-off control} sequence. Design of such sequences has attracted a considerable research attention in the recent past, see e.g., \cite{Nagahara2016a,Nagahara2016b}. Our design of a scheduling and control logic, \((\gamma_T,\upsilon_T)\), presented in this paper is similar in spirit to maximum hands-off control. Indeed, instead of minimizing the \(\ell_0\)-norm of control sequence for a specific plant, i.e., \(\pmat{u_i(0)\\\vdots\\u_i(T-1)}\), we minimize the \(\ell_0\)-norm of the \(t\)-th element of the control sequence of each plant, i.e., \(\pmat{u_1(t)\\\vdots\\u_N(t)}\). As discussed in \S\ref{s:proofs}, our proofs of Propositions \ref{prop:mainres3}-\ref{prop:mainres6} rely on the theory of maximum hands-off control presented in \cite{Nagahara2016b}.
        }
	\end{remark}

\section{Numerical example}
\label{s:numex}
\subsection{The NCS}
\label{ss:ncs}
    We consider an NCS with \(N=100\) discrete-time linear plants and a shared communication network of limited capacity \(M=10\). The plant dynamics, \((A_i,b_i)\), \(i=1,2,\ldots,N\) are chosen as follows:
    \begin{itemize}[label=\(\circ\), leftmargin=*]
        \item Elements of \(A_i\in\R^{2\times 2}\), \(i=1,2,\ldots,50\) and \(A_i\in\R^{3\times 3}\), \(i=51,52,\ldots,100\) are selected from the interval \([-2,+2]\) uniformly at random.
        \item It is ensured that \(A_i\), \(i=1,2,\ldots,N\) are Schur unstable. 
        \item Elements of \(b_i\in\R^2\), \(i=1,2,\ldots,50\) and \(b_i\in\R^3\), \(i=51,52,\ldots,100\) are selected from the interval \([-2,+2]\) uniformly at random.
        \item It is ensured that \((A_i,b_i)\), \(i=1,2,\ldots,N\) are reachable.
    \end{itemize}
    
    We let the time horizon \(T=50\) units of time. We choose the initial conditions, \(x_i(0) = \xi_i\in\R^2\), \(i=1,2,\ldots,50\) and \(x_i(0) = \xi\in\R^3\), \(i=51,52,\ldots,100\) from the intervals \([-1,+1]^2\) and \([-1,+1]^3\), respectively, uniformly at random.
\subsection{Scheduling and control logic, \((\gamma_T,\upsilon_T)\)}
\label{ss:sched-con_design}
    We employ Algorithm \ref{algo:sched-con_design} to design a control logic, \(\upsilon_T\). 
    
    We have that each plant \(i=1,2,\ldots,N\) is reachable. Thus, condition \ref{condn:c1} in Proposition \ref{prop:mainres3} holds. Let \(\P_1 = \{1,2,\ldots,10\}\), \(\P_2 = \{11,12,\ldots,20\}\), \(\P_3 = \{21,22,\ldots,30\}\), \(\P_4 = \{31,32,\ldots,40\}\), \(\P_5 = \{41,42,\ldots,50\}\), \(\P_6 = \{51,52,\ldots,60\}\), \(\P_7 = \{61,62,\ldots,70\}\), \(\P_8 = \{71,72,\ldots,80\}\), \(\P_9 = \{81,82,\ldots,90\}\), \(\P_{10} = \{91,92,\ldots,100\}\), and 
    \(\hat{d}_1=\hat{d}_2=\hat{d}_3=\hat{d}_4=\hat{d}_5=3\), \(\hat{d}_6=\hat{d}_7=\hat{d}_8=\hat{d}_9=\hat{d}_{10}=4\). It follows that \(\abs{\P_j} = 10 = M\), \(j=1,2,\ldots,10\), 
    \(\P_\ell\cap\P_m = \emptyset\) for all \(\ell,m=1,2,\ldots,10\), \(\ell\neq m\), \(\displaystyle{\bigcup_{j=1}^{10}\P_j} = \{1,2,\ldots,N\}\), \(\hat{d}_j>d_i\) for all \(i\in\P_j\), \(j=1,2,\ldots,10\), and \(\displaystyle{\sum_{j=1}^{10}}\hat{d}_j = 35 < 50\). Thus, condition \ref{condn:c2} in Proposition \ref{prop:mainres3} holds. Algorithm \ref{algo:input_design1} is applicable for solving the feasibility problem \eqref{e:feasprob1} and we utilize it.
    
    The resulting control logic, \(\upsilon_T\) and the corresponding scheduling logic, \(\gamma_T\), are plotted in Figures \ref{fig:con_logic} and \ref{fig:sched_logic}, respectively.
    \begin{figure}[htbp]
        \includegraphics{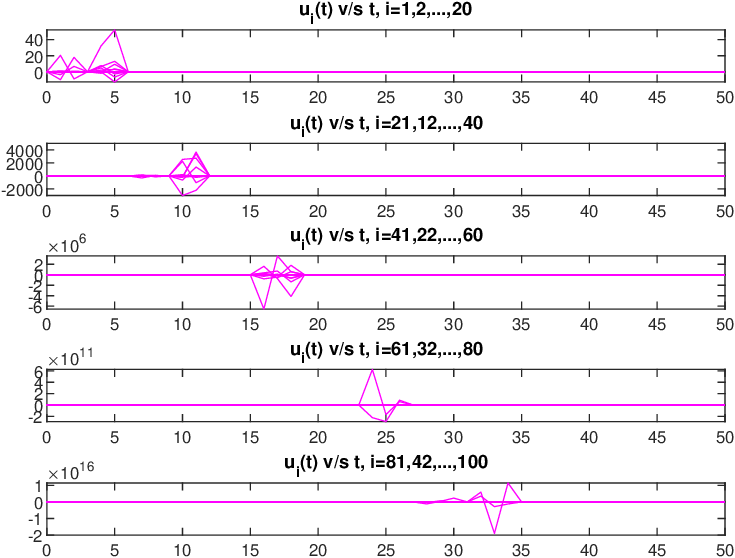}
        \caption{Control logic, \(\upsilon_T\)}\label{fig:con_logic}
    \end{figure}
    \begin{figure}[htbp]
        \includegraphics{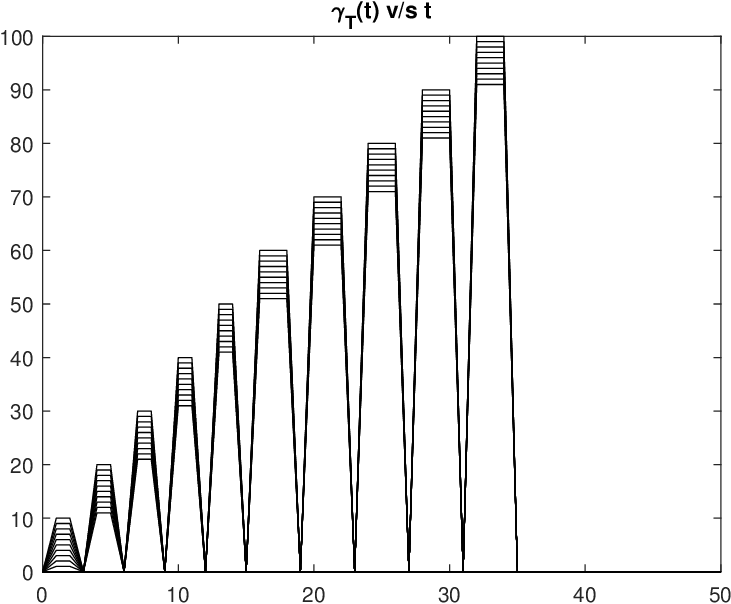}
        \caption{Scheduling logic, \(\gamma_T\)}\label{fig:sched_logic}
    \end{figure}
\subsection{State trajectories of the plants}
\label{ss:trajectory}
    We plot \(\bigl(\norm{x_i(t)}_{2}\bigr)_{t=0}^{50}\), \(i=1,2,\ldots,N\) in Figure \ref{fig:xplot}. Grouping of  plants is done to convey the effect of the location of non-zero control inputs clearly. For each plant \(i=1,2,\ldots,N\), it is observed that \(\xi_i\) is steered to \(0_{d_i}\) in \(T\) units of time.
\begin{figure}[htbp]
        \includegraphics{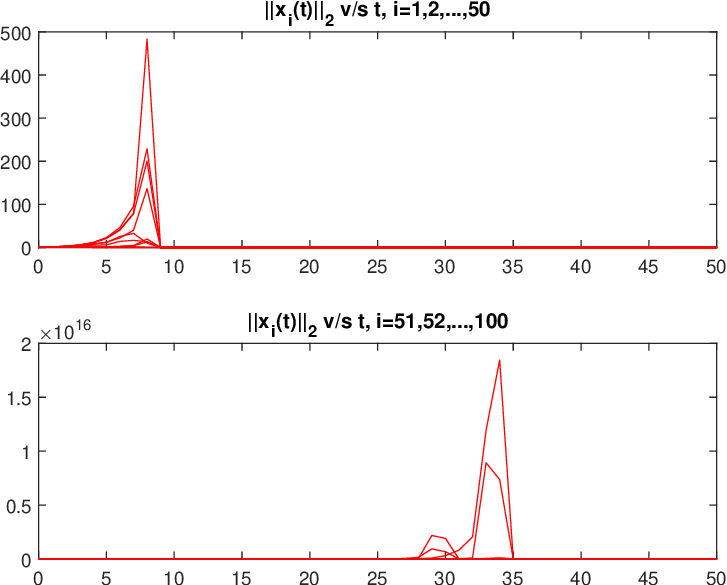}
        \caption{State trajectories, \(\bigl(\norm{x_i(t)}_2\bigr)_{t=0}^{50}\)}\label{fig:xplot}
    \end{figure}

    \begin{remark}
    \label{rem:numex1}
    \rm{
        As an alternative to Algorithm \ref{algo:input_design1}, one may employ Algorithm \ref{algo:input_design2} to solve the feasibility problem \eqref{e:feasprob1}. With the choice \(p=10=M\), \(\Nset_j=\P_j\), \(j=1,2,\ldots,10\), and \(\hat{d}_1=\hat{d}_2=\hat{d}_3=\hat{d}_4=\hat{d}_5=3\), \(\hat{d}_6=\hat{d}_7=\hat{d}_8=\hat{d}_9=\hat{d}_{10}=4\), we have that condition \ref{condn:d2} in Proposition \ref{prop:mainres4} holds. Application of Proposition \ref{prop:mainres5}-\ref{prop:mainres6} towards solving the feasibility problem \eqref{e:feasprob1} would, however, require determining whether or not for each plant \(i=1,2,\ldots,N\), the optimization problem \eqref{e:optprob1} admits a unique \(s_i\)-sparse solution. In general, due to the non-convexity of \(\ell_0\)-norm, solving the optimization problem \eqref{e:optprob1} and asserting uniqueness of its solution is a numerically difficult problem.
        }
    \end{remark}
\section{Concluding remarks}
\label{s:concln}
    In this paper we studied the design of scheduling and control logic for NCSs whose shared communication network has a limited communication capacity. Given a set of discrete-time linear plant dynamics, a set of non-zero initial conditions for the plants, the capacity of the communication network and a time horizon, a scheduling and control logic obtained from our algorithm ensures that the initial conditions are steered to zero in the given time horizon for all the plants. We employ feasibility problem with sparsity constraint (\(\ell_0\)-constraint) to compute the control logic and the corresponding scheduling logic is designed by allocating the network to the plants with non-zero control inputs. We also present a set of sufficient conditions under which our algorithm can be implemented numerically. 
    
    We identify the extension of our techniques to the design of scheduling and control logics when the communication networks are also prone to uncertainties like delays, data losses, etc. as a topic for further work. It is currently under investigation and will be reported elsewhere.
\section{Proofs of our results}
\label{s:proofs}
    \begin{proof}[Proof of Proposition \ref{prop:mainres0a}]
        Fix a scheduling and control logic, \((\gamma_T,\u_T)\), that satisfies \(\gamma_T(t)=\emptyset\) and \(\u_T(t) = 0\in\R^N\) for all \(t=0,1,\ldots,T-1\). Then for each plant \(i=1,2,\ldots,N\), we have 
        \[
            x_i(T) = A_i^{T}x_i(0) = A_i^T\xi_i = A_i^{T-\tau_i+\tau_i}\xi_i = A_i^{T-\tau_i}\bigl(A_i^{\tau_i}\xi_i\bigr) = A_i^{T-\tau_i}0_{d_i} = 0_{d_i}.
        \]
        The assertion of Proposition \ref{prop:mainres0a} is immediate.
     \end{proof}
     \begin{proof}[Proof of Proposition \ref{prop:mainres0b}]
        Fix a scheduling and control logic, \((\gamma_T,\u_T)\), that satisfy 
         \begin{align*}
            \gamma_T(t)&\subseteq\mathcal{N},\:\:t=0,1,\ldots,T-1,\:\text{and}\\
            \nu_T(t) &= \pmat{u_1(t)\\u_2(t)\\\vdots\\u_N(t)}\:\:\text{with}\:u_i(t)=0,\:\text{if}\:i\notin\mathcal{N}, \:u_i(t) = \overline{u}_i(t),\:\text{if}\:i\in\mathcal{N},\:t=0,1,\ldots,T-1,
        \end{align*}
        where \(\biggl(\overline{u}_i(t)\biggr)_{t=0}^{T-1}\in U_{\xi_i}\).
        
        Then for each plant \(i\in\mathcal{N}\), \(x_i(T) = 0_{d_i}\) under \(\biggl(\overline{u}_i(t)\biggr)_{t=0}^{T-1}\), and for each plant \(i\notin\mathcal{N}\), \(x_i(T) = A_{i}^{T-\tau_i+\tau_i}\xi_i = 0_{d_i}\).
        
        The assertion of Proposition \ref{prop:mainres0b} follows at once.
     \end{proof}
     \begin{proof}[Proof of Proposition \ref{prop:mainres0c}]
        Assume \(T < \lceil\frac{N}{M}\rceil\). Thus, there is at least one plant \(\overline{i}\in\{1,2,\ldots,N\}\) such that \(\overline{i}\notin\gamma_T(t)\) for all \(t=0,1,\ldots,T-1\). Since there is no \(\tau\in\{1,2,\ldots,T\}\) such that \(A_{\overline{i}}^{\tau}\xi_i=0_{d_i}\), we have that \(x_{\overline{i}}(T)\neq O_{d_i}\) under \(\gamma_T\).
        
        An application of contrapositive logic leads us to the assertion of Proposition \ref{prop:mainres0c}.
     \end{proof}
    \begin{proof}[Proof of Proposition \ref{prop:mainres1}]
        Let \(\upsilon_T\) be a solution to the feasibility problem \eqref{e:feasprob1}. 
        
        Firstly, we have that \(\gamma_T(t)\) is the set containing the elements of \(\{1,2,\ldots,N\}\) such that \({u}_i(t) \neq 0\), \(t=0,1,\ldots,T-1\). By construction of \(\upsilon_T\),
        \(\gamma_T(t)\) contains at most \(M\) distinct elements of the set \(\{1,2,\ldots,N\}\), \(t=0,1,\ldots,T-1\). It follows that \(\gamma_T\) obtained from Algorithm \ref{algo:sched-con_design} is a valid scheduling logic.
        
        Secondly, by construction of \(\gamma_T\), we have that \({u}_i(t) = 0\) whenever \(i\) is not an element of \(\gamma_T(t)\), \(t=0,1,\ldots,T-1\). Thus, \(\upsilon_T\) obtained from Algorithm \ref{algo:sched-con_design} is a valid control logic.
        
        Finally, by properties of \(\upsilon_T\), we have that for each plant \(i=1,2,\ldots,N\), \(\bigl({u}_i(t)\bigr)_{t=0}^{T-1}\) steers \(x_i(0) = \xi_i\) to \(x_i(T) = 0_{d_i}\) while obeying the state evoluation \(x_i(t+1) = A_ix_i(t) + b_i{u}_i(t)\), \(t=0,1,\ldots,T-1\). 
        
        The assertion of Proposition \ref{prop:mainres1} follows at once.
    \end{proof}
     The following lemma will be useful in our proofs of Propositions \ref{prop:mainres3} and \ref{prop:mainres4}. 
    \begin{lemma}
    \label{lem:auxres1}
        Fix plant \(i\in\{1,2,\ldots,N\}\). Suppose that the plant \(i\) is reachable. Let \(x_i(0) = \xi_i\in\R^{d_i}\) and \(\hat{d}\in\N\) satisfying \(\hat{d}>d_i\) be given. Then there exists a sequence of control inputs \(\bigl(u_i(t)\bigr)_{t=0}^{\hat{d}-1}\) under which \(x_i(\hat{d})=0_{d_i}\).
    \end{lemma}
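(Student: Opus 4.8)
The plan is to reduce the terminal condition \(x_i(\hat{d})=0_{d_i}\) to a square linear system governed by the reachability matrix \(\Psi_i\), and then invoke reachability to solve it; this simultaneously produces the explicit control used in Algorithm \ref{algo:input_design1}. First I would unroll the recursion \eqref{e:plants} from \(x_i(0)=\xi_i\) to obtain the closed form
\[
x_i(\hat{d}) = A_i^{\hat{d}}\xi_i + \sum_{k=0}^{\hat{d}-1}A_i^{\hat{d}-1-k}b_i\,u_i(k).
\]
In principle existence already follows abstractly: the map from the \(\hat{d}\) inputs to \(x_i(\hat{d})\) is affine with linear part \(\pmat{A_i^{\hat{d}-1}b_i & \cdots & A_ib_i & b_i}\), whose last \(d_i\) columns constitute \(\Psi_i\); since \(\hat{d}>d_i\) and \(\text{rank}(\Psi_i)=d_i\), this linear map is surjective onto \(\R^{d_i}\) and hence attains \(-A_i^{\hat{d}}\xi_i\). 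I would nonetheless prefer the constructive route, which also reproduces \eqref{e:u_comp1}.

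Concretely, using \(\hat{d}>d_i\), I would set the leading \(\hat{d}-d_i\) inputs to zero, so that \(x_i(\hat{d}-d_i)=A_i^{\hat{d}-d_i}\xi_i\) and only the final \(d_i\) inputs remain free. Reindexing the surviving sum shows that their contribution to \(x_i(\hat{d})\) is precisely \(\Psi_i\pmat{u_i(\hat{d}-d_i)\\\vdots\\u_i(\hat{d}-1)}\), so that the terminal constraint collapses to
\[
\Psi_i\pmat{u_i(\hat{d}-d_i)\\\vdots\\u_i(\hat{d}-1)} = -A_i^{\hat{d}}\xi_i .
\]
By Definition \ref{d:reachability} the matrix \(\Psi_i\in\R^{d_i\times d_i}\) has full rank \(d_i\) and is therefore invertible, whence the system has the unique solution \(-\Psi_i^{-1}A_i^{\hat{d}}\xi_i\). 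Prepending the \(\hat{d}-d_i\) zeros recovers exactly the sequence in \eqref{e:u_comp1} specialized to \(\tilde{d}_j=0\), and substituting it back verifies \(x_i(\hat{d})=0_{d_i}\).

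I do not expect a genuine obstacle: once the ``idle-then-steer'' splitting is made, the claim is a one-line consequence of invertibility of \(\Psi_i\). The only point needing care is identifying the correct square matrix to invert --- namely the reachability matrix \(\Psi_i\) formed from the \emph{last} \(d_i\) columns of the full input-to-state map, rather than the rectangular \(d_i\times\hat{d}\) map itself. Padding with \(\hat{d}-d_i\) leading zeros, which is legitimate exactly because \(\hat{d}>d_i\), is the device that isolates this invertible block and yields the explicit, sparse control sequence rather than merely an abstract existence statement.
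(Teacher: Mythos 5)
Your proof is correct and takes essentially the same route as the paper: the paper does not spell out an argument but simply invokes \cite[Lemma 1]{Nagahara2016b} and records the same explicit control \(\pmat{0_{\hat{d}-d_i}\\-\Psi_i^{-1}A_i^{\hat{d}}\xi_i}\), which is exactly the ``idle-then-steer'' sequence your reduction to the invertible square system \(\Psi_i v=-A_i^{\hat{d}}\xi_i\) produces. Your write-up is a self-contained verification of what the paper outsources to that reference, so no changes are needed.
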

    
    Lemma \ref{lem:auxres1} follows from \cite[Lemma 1]{Nagahara2016b}, In particular, the choice of \(\bigl(u_i(t)\bigr)_{t=0}^{\hat{d}-1}\) is given by
    \begin{align*}
        \pmat{u_i(0)\\u_i(1)\\\vdots\\u_i(\hat{d}-1)} = \pmat{0_{\hat{d}-d_i}\\-\Psi_i^{-1}A_i^{\hat{d}}\xi_i}.
    \end{align*}

     \begin{proof}[Proof of Proposition \ref{prop:mainres3}]
      Let \(\upsilon_T\) be a control logic obtained from Algorithm \ref{algo:input_design1}. We will show that \(\upsilon_T\) is a solution to the feasibility problem \eqref{e:feasprob1}.
     
     First, in view of the hypotheses \(\P_{\ell}\cap\P_{m}=\emptyset\) for all \(\ell,m = 1,2,\ldots,\lceil\frac{N}{M}\rceil\), \(\ell\neq m\) and \(\displaystyle{\bigcup_{j=1}^{\lceil\frac{N}{M}\rceil}}\P_j = \{1,2,\ldots,N\}\), and the construction of \({u}_i\), we have that \({u}_i(t)\) is well-defined for all \(t=0,1,\ldots,T-1\) and all \(i=1,2,\ldots,N\).
     
     Second, fix a plant \(i\in\{1,2,\ldots,N\}\). The sequence of control inputs \(\bigl({u}_i(t)\bigr)_{t=0}^{T-1}\) chosen as \(\displaystyle{\pmat{{u}_i(0)\\\vdots\\{u}_i(\tilde{d}_j-1)}} =0_{\tilde{d}_j}\), \(\displaystyle{\pmat{u_i\bigl(\tilde{d}_j+0\bigr)\\\vdots\\u_i\bigl(\tilde{d}_j+\hat{d}_j-1\bigr)} = \pmat{0_{\hat{d}_j-d_i}\\-\Psi_i^{-1}A_i^{\hat{d}_j}A_i^{\tilde{d}_j}{\xi}_i}}\), and \(\pmat{{u}_i(\tilde{d}_j+\hat{d}_j)\\\vdots\\{u}_i(T-1)} = 0_{T-\tilde{d}_j-\hat{d}_j}\).
     steers the state \(x_i(0) = \xi_i\) to \(x_i(T)=0_{d_i}\) under the dynamics \(x_i(t+1) = A_i x_i(t) + b_i{u}_i(t)\), \(t=0,1,\ldots,T-1\). Indeed, 
     \begin{itemize}[label=\(\circ\),leftmargin=*]
        \item with \(\bigl({u}_i(t)\bigr)_{t=0}^{\tilde{d}_j-1}\), the state \(x_i(0)=\xi_i\) is steered to \(x_i(\tilde{d}_j)=A_{\tilde{d}_j}\xi_i\),
        \item in view of Lemma \ref{lem:auxres1}, with \(\bigl({u}_i(t)\bigr)_{t=\tilde{d}_j+0}^{\tilde{d}_j+\hat{d}_j-1}\), the state \(x_i(\tilde{d}_j)=A_i^{\tilde{d}_j}\xi_i\) is steered to \(x_i(\tilde{d}_j+\hat{d}_j)=0_{d_i}\), and
        \item with \(\bigl({u}_i(t)\bigr)_{t=\tilde{d}_j+\hat{d}_j}^{T-1}\), the state \(x_i(\tilde{d}_j+\hat{d}_j)=0_{d_i}\) continues to be \(0_{d_i}\) for all \(t= \tilde{d}_j+\hat{d}_j+1,\ldots,T-1\).
     \end{itemize}
     Also, from Lemma \ref{lem:auxres1} and the hypothesis that \(\displaystyle{\sum_{j=1}^{\lceil\frac{N}{M}\rceil}}\hat{d}_j\leq T\), it follows that \(x_i(0)=\xi_i\) is steered to \(x_i(T)=0_{d_i}\) for all plants \(i=1,2,\ldots,N\) in the time horizon \(\{0,1,\ldots,T-1\}\).
     
     Third, by the construction of \(\upsilon_T\), for any \(t\in\{0,1,\ldots,T-1\}\), \({u}_i(t)\) is possibly non-zero only for \(i\in\{1,2,\ldots,N\}\cap\P_j\) for exactly one \(j\in\{1,2,\ldots,\bigl\lceil\frac{N}{M}\bigr\rceil\}\). Indeed, we have different values of \(\tilde{d}_j\) for each \(j=1,2,\ldots,\bigl\lceil\frac{N}{M}\bigr\rceil\). Since \(\abs{\P_j}\leq M\) for all \(j=1,2,\ldots,\bigl\lceil\frac{N}{M}\bigr\rceil\), it follows that \(\displaystyle{\norm{\upsilon_T(t)}_{0}}\leq M\), \(t=0,1,\ldots,T-1\). 
     
     Consequently, \(\upsilon_T\) is a solution to the feasibility problem \eqref{e:feasprob1}.
     \end{proof}

    \begin{proof}[Proof of Proposition \ref{prop:mainres4}]
        Let \(\upsilon_T\) be a control logic obtained from Algorithm \ref{algo:input_design2}. We will show that \(\upsilon_T\) is a solution to the feasibility problem \eqref{e:feasprob1}.
    
    First, in view of the hypotheses \(\Nset_m\cap\Nset_n=\emptyset\) for all \(m,n=1,2,\ldots,p\), \(m\neq n\) and \(\displaystyle{\bigcup_{j=1}^{p}\Nset_j=\{1,2,\ldots,N\}}\), and construction of \({u}_i\) in Algorithm \ref{algo:input_design2}, we have that \({u}_i(t)\) is well-defined for all \(t=0,1,\ldots,T-1\) and all \(i=1,2,\ldots,N\).
    
    Second, for all \(i=1,2,\ldots,N\), the sequence of control inputs \(\bigl({u}_i(t)\bigr)_{t=0}^{T-1}\) chosen as \( \displaystyle{\pmat{{u}_{\Nset_j(k)}(0)\\\vdots\\{u}_{\Nset_j(k)}(\tilde{d}_i-1)}} =0_{\tilde{d}_i}\), 	\(\pmat{u_{\Nset_j(k)}\bigl(\tilde{d}_{\Nset_j(k)}+0\bigr)\\\vdots\\u_{\Nset_j(k)}\bigl(\tilde{d}_{\Nset_j(k)}+\hat{d}_{\Nset_j(k)}-1\bigr)} = \pmat{0_{\hat{d}_{\Nset_j(k)}-d_{\Nset_j(k)}}\\-\Psi_{\Nset_j(k)}^{-1}A_{\Nset_j(k)}^{\hat{d}_{\Nset_j(k)}}A_{\Nset_j(k)}^{\tilde{d}_{\Nset_j(k)}}{\xi}_{\Nset_j(k)}}\) and \( \pmat{\overline{u}_{\Nset_j(k)}(\tilde{d}_{\Nset_j(k)}+\hat{d}_{\Nset_j(k)})\\\vdots\\\overline{u}_{\Nset_j(k)}(T-1)} = 0_{T-\tilde{d}_{\Nset_j(k)}-\hat{d}_{\Nset_j(k)}}\)
     steers the state \(x_i(0) = \xi_i\) to \(x_i(T)=0_{d_i}\) under the dynamics \(x_i(t+1) = A_i x_i(t)+b_i u_i(t)\), \(t=0,1,\ldots,T-1\). A similar explanation as provided in our proof of Proposition \ref{prop:mainres3} holds. Also, in view of Lemma \ref{lem:auxres1} and the hypothesis that \(\displaystyle{\sum_{i\in\Nset_j}}\hat{d}_i\leq T\), it follows that \(x_i(0) = \xi_i\) is steered to \(x_i(T) = 0_{d_i}\) for all plants \(i=1,2,\ldots,N\) in the time horizon \(\{0,1,\ldots,T-1\}\). 
     
     Third, since \(p\leq M\) and by construction of \(\upsilon_T\), we have that for any \(t\in\{0,1,\ldots,T-1\}\), \({u}_i(t)\) is possibly non-zero only for \(p\)-many \(i\in\{1,2,\ldots,N\}\). It follows that \(\norm{\upsilon_T}_0\leq M\), \(t=0,1,\ldots,T-1\).
     
     Consequently, \(\upsilon_T\) is a solution to the feasibility problem \eqref{e:feasprob1}.
   \end{proof}
    \begin{proof}[Proof of Proposition \ref{prop:mainres5}]
       We first show that the optimization problem \eqref{e:optprob2} is convex. Notice that the problem can be rewritten as
       \begin{align*}
            \underset{\bigl(u_i(t)\bigr)_{t=0}^{T-1}\in\R^T}\minimize&\:\norm{\pmat{u_i(0)\\u_i(1)\\\vdots\\u_i(T-1)}}_{1}\\
            \sbjto&\:
            \begin{cases}
                x_i(t+1) = A_ix_i(t)+b_iu_i(t),\:t=0,1,\ldots,T-1,\\
                x_i(0) = \xi_i,\:x_i(T) = 0_{d_i},
            \end{cases}
       \end{align*}
       which is equivalent to
       \begin{align}
       \label{e:optproba}
            \underset{\bigl(u_i(t)\bigr)_{t=0}^{T-1}\in\R^T}\minimize&\:\norm{\pmat{u_i(0)\\u_i(1)\\\vdots\\u_i(T-1)}}_{1}\\
            \sbjto&\:\Phi_i\pmat{u_i(0)\\u_i(1)\\\vdots\\u_i(T-1)} = -A_i^T\xi_i.
       \end{align}
       
       We have that for the optimization problem \eqref{e:optproba}, the objective function is convex as \(\ell_1\)-norm is a convex function \cite[Chapter 2]{Nagahara_book} and the constraint function is convex in view of linearity.
       
       It remains to show that if \(\biggl(\bigl(\overline{u}_i(t)\bigr)_{t=0}^{T-1}\biggr)_{i=1}^{N}\) that satisfy condition \ref{condn:e4} are unique and \(\Phi_i\), \(i=1,2,\ldots,N\) satisfy restricted isometry property of order \(2s_i\) with \(\delta_{2s_i}<\sqrt{2}-1\), \(i=1,2,\ldots,N\), then \(\biggl(\bigl(\overline{u}_i(t)\bigr)_{t=0}^{T-1}\biggr)_{i=1}^{N}\) can be obtained by solving the optimization problem \eqref{e:optprob2} for all \(i=1,2,\ldots,N\). 
       
       In view of \cite[Theorem 1]{Nagahara2016b} we have that if \(\bigl(\tilde{u}_i(t)\bigr)_{t=0}^{T-1}\) is a unique solution to \eqref{e:optprob1}, then it is equivalent to the solution to \eqref{e:optprob2} under conditions \ref{condn:f1-2}-\ref{condn:f4}. The assertion of Proposition \ref{prop:mainres6} follows at once.
    \end{proof}
    \begin{proof}[Proof of Proposition \ref{prop:mainres6}]
        Let \(x_i(0)=\xi_i\), \(i=1,2,\ldots,N\) be given. Since all plants \(i=1,2,\ldots,N\) are reachable and \(T>d_i\) for all \(i=1,2,\ldots,N\), it follows from Lemma \ref{lem:auxres1} that there exist \(\biggl(\bigl(\overline{u}_i(t)\bigr)_{t=0}^{T-1}\biggr)_{i=1}^{N}\) under which \(x_i(T) = 0_{d_i}\), \(i=1,2,\ldots,N\). 
        
        Let \(\tilde{u}_i(t) = u_i(t)\) for all \(t=0,1,\ldots,T-1\) and \(i=1,2,\ldots,N\). Let \(\biggl(\bigl(\overline{u}_i(t)\bigr)_{t=0}^{T-1}\biggr)_{i=1}^{N}\) be \(\bigl(s_i\bigr)_{i=1}^{N}\)-sparse and satisfy condition \ref{condn:e4}. We need to show that the control logic, \(\upsilon_T(t) = \pmat{\overline{u}_1(t)\\\overline{u}_2(t)\\\vdots\\\overline{u}_N(t)}\), \(t=0,1,\ldots,T-1\) is a solution to the feasibility problem \eqref{e:feasprob1}.
        
        We first note that since \(\bigl(\overline{u}_i(t)\bigr)_{t=0}^{T-1}\in U_{\xi_i}\), \(i=1,2,\ldots,N\), \(\bigl(\overline{u}_i(t)\bigr)_{t=0}^{T-1}\) steers \(x_i(0) = \xi_i\) to \(x_i(T)=0_{d_i}\), \(i=1,2,\ldots,N\). Since \(\displaystyle{\sum_{i\in\mathcal{R}_j}s_i\leq T}\), for each plant \(i\), \(\bigl(\overline{u}_i(t)\bigr)_{t=0}^{T-1}\) can be accommodated in the time horizon, \(T\). 
        
        We next note that for each \(j=1,2,\ldots,M\) and \(t=0,1,\ldots,T-1\), \(\overline{u}_i(t)\neq 0\) either for no or for exactly one \(i\in\mathcal{R}_j\). Thus, for at most \(M\) elements in \(\{1,2,\ldots,N\}\), \(\overline{u}_i(t) \neq 0\) for every \(t=0,1,\ldots,T-1\). It follows that \(\norm{\upsilon_T(t)}_{0}\leq M\), \(t=0,1,\ldots,T-1\).
    \end{proof}


\begin{thebibliography}{10}

\bibitem{Al-Areqi'15}
{\sc S.~Al-Areqi, D.~G\"orges, and S.~Liu}, {\em Event-based control and
  scheduling codesign: stochastic and robust approaches}, IEEE Transactions on
  Automatic Control, 60 (2015), pp.~1291--1303.

\bibitem{Dai2010}
{\sc S.-L. Dai, H.~Lin, and S.~S. Ge}, {\em Scheduling-and-control codesign for
  a collection of networked control systems with uncertain delays}, IEEE
  Transactions on Control Systems Technology, 18 (2010), pp.~66--78.

\bibitem{Gatsis2016}
{\sc K.~Gatsis, A.~Ribeiro, and G.~Pappas}, {\em Control-aware random access
  communication}.
\newblock Proceedings of the 7th ACM/IEEE International Conference on
  Cyber-Physical Systems (ICCPS), 2016, Vienna, Austria, pp. 1-9.

\bibitem{Hristu2001}
{\sc D.~Hristu-Varsakelis}, {\em Feedback control systems as users of a shared
  network: Communication sequences that guarantee stability}.
\newblock Proceedings of the 40th IEEE Conference on Decision and Control,
  2001, Orlando, Florida, USA, pp. 3631-3636.

\bibitem{Hristu2008}
{\sc D.~Hristu-Varsakelis}, {\em Short-period communication and the role of
  zero-order holding in networked control systems}, IEEE Transactions on
  Automatic Control, 53 (2008), pp.~1285--1290.

\bibitem{Hristu_Zhang2008}
{\sc D.~Hristu-Varsakelis and L.~Zhang}, {\em L{QG} control of networked
  control systems with access constraints and delays}, International Journal of
  Control, 81 (2008), pp.~1266--1280.

\bibitem{Ikeda2022}
{\sc T.~Ikeda and K.~Kashima}, {\em Sparse control node scheduling in networked
  systems based on approximate controllability metrics}, IEEE Trans. Control
  Netw. Syst., 9 (2022), pp.~1166--1177.

\bibitem{def}
{\sc A.~Kundu}, {\em Scheduling networked control systems under jamming
  attacks}, Proceedings of the 59th IEEE Conference on Decision and Control,
  (2020), p.~3298–3303.

\bibitem{abc}
{\sc A.~Kundu and D.~E. Quevedo}, {\em Stabilizing scheduling policies for
  networked control systems}, IEEE Transactions on Control of Network Systems,
  7 (2020), pp.~163--175.

\bibitem{ghi}
\leavevmode\vrule height 2pt depth -1.6pt width 23pt, {\em On periodic
  scheduling and control for networked systems under random data loss}, IEEE
  Transactions on Control of Network Systems, 8 (2021), pp.~1788--1798.

\bibitem{Liberzon}
{\sc D.~Liberzon}, {\em Switching in systems and control}, Systems \& Control:
  Foundations \& Applications, Birkh\"auser Boston Inc., Boston, MA, 2003.

\bibitem{Lin2005}
{\sc H.~Lin, G.~Zhai, L.~Fang, and P.~J. Antsaklis}, {\em Stability and h-inf
  performance preserving scheduling policy for networked control systems}.
\newblock Proc. of the 16th IFAC World Congress, 2005, Prague, Czech Republic.

\bibitem{Quevedo2014}
{\sc M.~Lje\v{s}njanin, D.~E. Quevedo, and D.~Ne\v{s}i\'{c}}, {\em Packetized
  {MPC} with dynamic scheduling constraints and bounded packet dropouts},
  Automatica J. IFAC, 50 (2014), pp.~784--797.

\bibitem{Ma2019}
{\sc Y.~Ma, J.~Guo, Y.~Wang, A.~Chakrabarty, H.~Ahn, P.~Orlik, and C.~Lu}, {\em
  Optimal dynamic scheduling of wireless networked control systems}.
\newblock Proceedings of the 10th ACM/IEEE International Conference on
  Cyber-Physical Systems, 2019, Montreal, Canada, pp. 77-86.

\bibitem{Nagahara_book}
{\sc M.~Nagahara}, {\em Sparsity Methods for Systems and Control}, Now
  Publishers, 2020.

\bibitem{Nagahara2016b}
{\sc M.~Nagahara, J.~Ostergaard, and D.~E. Quevedo}, {\em Discrete-time
  hands-off control by sparse optimization}, EURASIP Journal on Advances in
  Signal Processing, 76 (2016).

\bibitem{Nagahara2016a}
{\sc M.~Nagahara, D.~E. Quevedo, and D.~Ne\v{s}i\'{c}}, {\em Maximum hands-off
  control: a paradigm of control effort minimization}, IEEE Trans. Automat.
  Control, 61 (2016), pp.~735--747.

\bibitem{Peters'16}
{\sc E.~G.~W. Peters, D.~E. Quevedo, and M.~Fu}, {\em Controller and scheduler
  codesign for feedback control over {IEEE} 802.15.4 networks}, IEEE
  Transactions on Control Systems Technology, 24 (2016), pp.~2016--2030.

\bibitem{Rehbinder2004}
{\sc H.~Rehbinder and M.~Sanfridson}, {\em Scheduling of a limited
  communication channel for optimal control}, Automatica, 40 (2004),
  pp.~491–--500.

\bibitem{Saha2015}
{\sc I.~Saha, S.~Baruah, and R.~Majumdar}, {\em Dynamic scheduling for
  networked control systems}.
\newblock Proceedings of the 18th ACM International Conference on Hybrid
  Systems: Computation \& Control, 2015, Seattle, Washington, pp. 98-107.

\bibitem{Zhang2006}
{\sc L.~Zhang and D.~Hristu-Varsakelis}, {\em Communication and control
  co-design for networked control systems}, Automatica J. IFAC, 42 (2006),
  pp.~953--958.

\end{thebibliography}

\end{document}